
\documentclass[journal]{IEEEtran}
%


%

%
\usepackage{cite}

%
\ifCLASSINFOpdf
   \usepackage[pdftex]{graphicx}
\else
\fi
%
%

\usepackage{hyperref}
\usepackage{multirow}

%
\usepackage{amsmath}
\usepackage{amsthm}
\usepackage{amssymb}
\usepackage{amsbsy}
\usepackage{cleveref}
\usepackage{algorithm}
\usepackage{algpseudocode}
\DeclareMathOperator*{\argmin}{argmin}

\newcommand{\etal}{\textit{et al}}
\newcommand{\Fig}{\text{Fig. }}
\usepackage{color}

\newtheorem{lemma}{Lemma}

\setlength\floatsep{0.5\baselineskip plus 3pt minus 2pt}
\setlength\textfloatsep{0.5\baselineskip plus 3pt minus 2pt}
\setlength\intextsep{1.25\baselineskip plus 3pt minus 2 pt}
\hyphenation{op-tical net-works semi-conduc-tor}

\begin{document}
%
\title{Learning Convolutional Sparse Coding on Complex Domain for Interferometric Phase Restoration}
%
%
%

\author{Jian~Kang,~\IEEEmembership{Member,~IEEE,}
        Danfeng~Hong,~\IEEEmembership{Member,~IEEE,}
        Jialin~Liu,
        Gerald~Baier,
        Naoto~Yokoya,~\IEEEmembership{Member,~IEEE,}
        and Beg\"um~Demir,~\IEEEmembership{Senior Member,~IEEE}

\thanks{This work was supported by the European Research Council (ERC) through the ERC-2017-STG BigEarth Project under Grant 759764, and also supported by the Japan Society for the Promotion of Science (JSPS) under Grant KAKENHI 18K18067. (\emph{Corresponding author: Danfeng Hong}).}
\thanks{J. Kang and B. Demir are with Faculty of Electrical Engineering and Computer Science, Technische Universit\"at Berlin (TU Berlin), 10587 Berlin, Germany. (e-mail: jian.kang@tu-berlin.de; demir@tu-berlin.de)}
\thanks{D. Hong is with Univ. Grenoble Alpes, CNRS, Grenoble INP, GIPSA-lab, 38000 Grenoble, France. (e-mail: hongdanfeng1989@gmail.com)}
\thanks{J. Liu is with Department of Mathematics, University of California, Los Angeles (UCLA), Los Angeles, USA (e-mail: liujl11@math.ucla.edu).}
\thanks{G. Baier and N. Yokoya are with the Geoinformatics Unit, RIKEN Center for Advanced Intelligence Project (AIP), RIKEN, 103-0027 Tokyo, Japan. (e-mail: gerald.baier@riken.jp; naoto.yokoyariken.jp)}}

%
%

\markboth{IEEE Transactions on Neural Networks and Learning Systems, 2020}%
{Shell \MakeLowercase{\textit{et al.}}: Bare Demo of IEEEtran.cls for IEEE Journals}
%



\maketitle

\begin{abstract}
\textcolor{blue}{This is the pre-acceptance version, to read the final version please go to IEEE Transactions on Neural Networks and Learning Systems on IEEE Xplore.} Interferometric phase restoration has been investigated for decades and most of the state-of-the-art methods have achieved promising performances for InSAR phase restoration. These methods generally follow the nonlocal filtering processing chain aiming at circumventing the staircase effect and preserving the details of phase variations. In this paper, we propose an alternative approach for InSAR phase restoration, \textit{i.e.} Complex Convolutional Sparse Coding (ComCSC) and its gradient regularized version. To our best knowledge, this is the first time that we solve the InSAR phase restoration problem in a deconvolutional fashion. The proposed methods can not only suppress interferometric phase noise, but also avoid the staircase effect and preserve the details. Furthermore, they provide an insight of the elementary phase components for the interferometric phases. The experimental results on synthetic and realistic high- and medium-resolution datasets from  TerraSAR-X StripMap and Sentinel-1 interferometric wide swath mode, respectively, show that our method outperforms those previous state-of-the-art methods based on nonlocal InSAR filters, particularly the state-of-the-art method: InSAR-BM3D.
The source code of this paper will be made publicly available for reproducible research inside the community.

\end{abstract}

\begin{IEEEkeywords}
Convolutional dictionary learning, sparse coding, SAR interferometry (InSAR), nonlocal filtering
\end{IEEEkeywords}

%
\IEEEpeerreviewmaketitle

\section{Introduction}
\subsection{Interferometric Phase Restoration}
%
%
%
%
\IEEEPARstart{D}{ue} to its all-weather capability, up to decimeter spatial resolution and high sensitivity to deformation and height changes Synthetic Aperture Radar (SAR) plays an important role in remote sensing from airborne and spaceborne platforms.
By creating interferograms of SAR images acquired at different points in time or from changing platform positions, geophysical parameters, such as heights and displacement rates, can be extracted by analyzing the interferometric phase.

As a result of the coherence loss between acquisitions, the interferometric phase is corrupted by noise.
Noise removal is consequently an almost obligatory step, not only to increase the measurements' accuracy but also to ease the subsequent phase unwrapping.
One of the most straightforward methods of noise mitigation is averaging all phases inside a predefined spatial neighborhood, so-called \textit{boxcar} filtering.
Although easy to implement and fast to compute, the penalty is a degradation of the spatial resolution.
For overcoming such limitation, many advanced filters have been introduced, such as Lee's sigma filter \cite{lee1983simple} FIXME \cite{lee1998insarfilter} which utilizes statistical test for the pixel selection during the averaging, and Goldstein filter \cite{goldstein1998radar} which leverages the local power spectrum estimation of the signal for lowering the noise component.
In recent years, nonlocal-filtering based approaches, which were first applied to optical natural images \cite{buades2005non}, have received great attention from the image processing community.
Fundamentally, by averaging a group of similar pixels selected in a nonlocal manner, noise can be efficiently mitigated without degrading image details.
Such method also became a topic of extensive research in the SAR community, such as SAR amplitude imagery denoising \cite{deledalle2009iterative,parrilli2011nonlocal,cozzolino2013fast,di2016scattering}, interferometric phase denoising \cite{deledalle2011nl,deledalle2015nl,chen2013interferometric,lin2014nonlocal,sica2018insar,gao2019novel}, and PolSAR imagery restoration \cite{deledalle2015nl,chen2010nonlocal,d2013iterative}.
The nonlocal processing chain has also been extended for phase restoration of SAR stacks with the application of differential SAR interferometry \cite{sica2015nonlocal} and the preprocessing step of 3D reconstruction based on TomoSAR \cite{d2017nonlocal}.

Although nonlocal-filtering based approaches are very popular in the field of InSAR denoising, Hao \etal \cite{hongxing2015interferometric} propose a sparse coding model for approximating InSAR phase patches based on the linear combination of the learned atoms  in a dictionary. In particular, given a signal $ \mathbf{s}\in \mathbb{C}^N$ and a \textit{dictionary} matrix $ \mathbf{D}\in\mathbb{C}^{N\times M} $, $ \mathbf{s} $ can be represented by a linear combination of only few of the columns in $ \mathbf{D} $, \textit{i.e.}, $ \mathbf{s}\approx\mathbf{D}\mathbf{x} $, where $ \mathbf{x}\in\mathbb{C}^M $ is sparse. The problem of computing the sparse representation $ \mathbf{x} $ for $ \mathbf{s} $, given the dictionary $ \mathbf{D} $ is termed as \textit{sparse coding}. It can be formulated as Basis Pursuit Denoising (BPDN) problem as follows \cite{chen2001atomic}:
\begin{equation}
	\argmin_\mathbf{x}\frac{1}{2}\|\mathbf{D}\mathbf{x}-\mathbf{s}\|_2^2+\lambda\|\mathbf{x}\|_1
\label{eq:BPDN}
\end{equation}
Such sparse representation model has been a well-established tool for a very broad range of signal and image processing applications \cite{bruckstein2009sparse,hong2015novel,kang2017robust,hong2018sulora,kang2018object,hong2019augmented,huizhang2020}. Also, extensive research for SAR data modeling based on sparse representations have also been done in recent years, such as image classification \cite{zhang2015fully,yang2016riemannian,hong2017learning,zhong2017unsupervised,ren2018polsar,hang2019cascaded} and imagery denoising \cite{xu2015iterative,liu2017sar}.

Due to the computational cost, the signal $\mathbf{s}$ in Problem (\ref{eq:BPDN}) is usually a small patch rather than the entire image in practice. In order to compute a sparse representation for an entire image, those conventional methods based on solving the problem (\ref{eq:BPDN}) should be processed independently on a set of overlapping blocks covering the image. For reconstructing the whole image, the restored results of such overlapping blocks are stitched together by averaging the overlapping parts.

However, such process ignores the consistency of image pixels,
that is, any two patches should share the same pixel values on their overlapping area.
Moreover, the local structures and textures may be inevitably changed due to the application of aggregation and averaging strategies to the final value of each pixel.
Also, those strategies can induce the over-smoothing of details in images \cite{liu2016image,hong2016robust}.

Recently, to fix these issues, \textit{convolutional sparse coding}
is proposed \cite{zeiler2010deconvolutional,chalasani2013fast,bristow2013fast,heide2015fast,wohlberg2016efficient,liu2017online,liu2018first}. By replacing the dictionary $ \mathbf{D} $ with a set of convolutional filters $ \{\mathbf{d}_m \} $, the associated reconstruction of $ \mathbf{s} $ from sparse representations $ \{\mathbf{x}_m \} $ is $ \mathbf{s}\approx\sum_m\mathbf{d}_m\ast\mathbf{x}_m $, where $ \mathbf{s} $ can be an entire image rather than a small image patch and $ \ast $ denotes the convolutional operator.
Since the convolutional operator is computationally cheap in the Fourier domain \cite{wu2019orsim,wu2019fourier}, such convolutional representation
can be obtained by a global optimization in the entire image space,
that is,
Convolutional Basis Pursuit Denoising (CBPDN):
\begin{equation}
	\centering
	\argmin_{\{\mathbf{x}_m \}}\frac{1}{2}\|\sum_m\mathbf{d}_m\ast\mathbf{x}_m - \mathbf{s} \|_2^2 + \lambda\sum_m\|\mathbf{x}_m\|_1.
	\label{eq:CBPDN}
\end{equation}

Based on the success of
the convolutional sparse coding model
in natural image processing \cite{liu2016image,gu2015convolutional,zhang2017convolutional,zhang2017convolutional2}, we seek to propose the corresponding approach in the complex domain to: 1) investigate the sparse representation for InSAR phase in a convolutional manner, 2) evaluate its performance for InSAR phase restoration.

\subsection{Contributions of this paper}
The contributions of this paper can be summarized as follows:
\begin{itemize}
	\item To avoid the staircase effect and preserve the details of phase variations, we propose a complex convolutional sparse coding (ComCSC) algorithm and its gradient regularized version (ComCSC-GR) for interferometric phase restoration. To our best knowledge, this is the first time to investigate the problem of the phase restoration in a deconvolutional manner.
	\item Superior to the conventional sparse coding (SC) model in Eq. \eqref{eq:BPDN} processed on image patches, the proposed ComCSC-based methods can progressively decompose the image from local attention to global aggregation by means of the deconvolutional manner, which can provide an insight for the elementary phase components for the interferometric phases.
	\item Beyond the conventional convolutional sparse coding (CSC) model, the resulting ComCSC and its variant perform the image coding on the complex domain. In particular, we theoretically prove the feasibility of complex-valued sparse coding and provide the corresponding update rule. Additionally, the proposed ComCSC model is an extended version targeting at processing complex signals effectively, which enables the CSC to be applicable on the complex domain.
	\item The effectiveness and superiority of the proposed methods have been quantitatively demonstrated on synthetic and real datasets and compared to other state-of-the-art methods. We will open the source codes to enable reproducible research.
\end{itemize}
\subsection{Structure of this paper}
The rest of this paper is organized as follows. Section \ref{sc:ccdl} introduces the proposed convolutional dictionary learning model in complex domain. In Section \ref{sc:ccdl_GR}, we propose a complex convolutional dictionary learning model with the regularization of gradients. Simulated experiments and real case study are conducted in Section \ref{sc:experiments}. Section \ref{sc:conclustion} draws the conclusion of this paper.

\section{Complex Convolutional Dictionary Learning}\label{sc:ccdl}

Before solving CBPDN (\ref{eq:CBPDN}), we should learn a set of convolutional filters $\{\mathbf{d}_m\}_{m=1}^M$ from a batch of \textit{clean} interferograms $ \{\mathbf{s}_k\}_{k=1}^K$, which are also termed as \textit{training} interferograms. To achieve this point, we propose the Complex Convolutional Dictionary Learning (CCDL) problem:
\begin{equation}
\begin{aligned}
	&\argmin_{\{\mathbf{d}_m\},\{\mathbf{x}_{m,k} \}}\frac{1}{2}\sum_{k=1}^{K}\|\sum_{m=1}^{M} \mathbf{d}_m\ast\mathbf{x}_{m,k}-\mathbf{s}_k\|_2^2\\
	&+\lambda\sum_{m=1}^{M}\sum_{k=1}^{K}\|\mathbf{x}_{m,k}\|_1, \;
	{\rm s.t.} \; \|\mathbf{d}_m\|_2=1, \; \forall m,
\end{aligned}
\label{eq:cpx_conv_dl}
\end{equation}
where $ \mathbf{s}_k\in \mathbb{C}^N$ is one of the training interferograms with $ N $ pixels, $ K $ is the total number of training interferograms, $ \{\mathbf{x}_{m,k} \} \in \mathbb{C}^N$ and $ \{\mathbf{d}_m \} \in \mathbb{C}^L$ denote the sets of complex sparse coefficient maps and complex filters respectively, $ M $ is the number of filters, $ L $ is the size of each filter $ \mathbf{d}_m $, and the constraint indicates the normalization of learned filters. (For notation simplicity, interferograms and the coefficient maps are considered to be $ N $ dimensional vectors, and filters are $ L $ dimensional vectors.)
In this paper, given a complex-valued vector $\mathbf{x}\in\mathbb{C}^N$, the norms are defined as \[\|\mathbf{x}\|_1 = \sum_{i=1}^N |x_i|, ~~\|\mathbf{x}\|_2 = \sqrt{\sum_{i=1}^N |x_i|^2},\]where $x_i\in\mathbb{C}$ is the $i$-th element of vector $\mathbf{x}$, and $|\cdot|$ means the amplitude value of a complex number. It is worth noting that different from the conventional convolutional dictionary learning models, the sparse coefficient maps $ \{\mathbf{d}_m \} $ and the convolutional kernels are all set in complex domain. The main idea for convolutional dictionary learning is to decompose the input signal $ \mathbf{s}_k $ in a deconvolutional manner. If the signal lies in complex domain, one can also constrain the sparse coefficient maps as real numbers, and the convolutional kernels are complex, or the convolutional kernels are real and the sparse coefficient maps are complex. However, we found that both the kernels and the sparse coefficient maps are expected to be complex in order to maintain the information from the original complex signals as much as possible. More specifically, on one hand, the phase information of the signal is not lost through the convolutional decomposition. On the other hand, since a real number can be considered as a complex number with zero on its imaginary part, it can also be learned during the optimization of those learnable variables. Therefore, the proposed complex convolutional dictionary learning model can be considered as a generalization of the normal one.

An usual optimization strategy to solve problem (\ref{eq:cpx_conv_dl}) is alternately updating the sparse coefficient maps $\{\mathbf{x}_{m,k} \}$ and the dictionary $\{\mathbf{d}_m \}$.

\subsection{Sparse Coefficients Update}
We first fix the filters $ \{\mathbf{d}_m \}$ and update the sparse coefficient maps $\{\mathbf{x}_{m,k} \}$ in (\ref{eq:cpx_conv_dl}) with
\begin{equation}
	\argmin_{\{\mathbf{x}_{m,k} \}}\frac{1}{2}\sum_{k=1}^{K}\|\sum_{m=1}^{M} \mathbf{d}_m\ast\mathbf{x}_{m,k}-\mathbf{s}_k\|_2^2+\lambda\sum_{m=1}^{M}\sum_{k=1}^{K}\|\mathbf{x}_{m,k}\|_1.
	\label{eq:multi_CBPDN}
\end{equation}
By defining
\begin{equation}
\begin{aligned}
	\mathbf{X}=&\begin{bmatrix}
	\mathbf{x}_{0,0}  & \dots & \mathbf{x}_{0,K} \\
	\vdots & \ddots & \vdots \\
	\mathbf{x}_{M,0} & \dots &  \mathbf{x}_{M,K}
	\end{bmatrix}\in\mathbb{C}^{MN\times K} \\
	\mathbf{S}=&\left[\mathbf{s}_0 \dots \mathbf{s}_K\right]\in\mathbb{C}^{N\times K} \\
	\mathbf{D}=&\left[\mathbf{D}_0 \dots \mathbf{D}_M\right]\in\mathbb{C}^{N\times MN},
\end{aligned}
\label{eq:concate}
\end{equation}
where $ \mathbf{D}_m $ is the matrix form of the convolutional operator that satisfies $ \mathbf{D}_m\mathbf{x}_m=\mathbf{d}_m\ast\mathbf{x}_m $, we can reformulate \eqref{eq:multi_CBPDN} in the form
\begin{equation}
	\argmin_{\mathbf{X}}\frac{1}{2}\|\mathbf{D}\mathbf{X}-\mathbf{S}\|_F^2+\lambda\|\mathbf{X}\|_{1,1}.
	\label{eq:compact_multiCBPDN}
\end{equation}

To solve the problem \eqref{eq:compact_multiCBPDN}, we utilize the \textit{Alternating Direction Method of Multipliers} (ADMM) \cite{boyd2011distributed,hong2019cospace,hong2019learnable,hong2019learning} in this paper. We refer the readers to \cite{garcia2018convolutional} for other algorithms to solve this problem. By introducing \textit{dual variable} $ \mathbf{U} $, penalty parameter $ \rho $ and auxiliary variable $ \mathbf{Y} $, the  corresponding scaled augmented Lagrangian function is defined as \cite{boyd2011distributed}:
\begin{equation}
\begin{aligned}
	L_\rho(\mathbf{X},\mathbf{Y},\mathbf{U})=&\frac{1}{2}\|\mathbf{D}\mathbf{X}-\mathbf{S}\|_F^2+\lambda\|\mathbf{Y}\|_{1,1}\\
	&+\frac{\rho}{2}\|\mathbf{X}-\mathbf{Y} + \mathbf{U}\|_F^2.
\end{aligned}
\label{eq:aug_lag_sparsecoding}
\end{equation}
Accordingly, the minimization of $ L_\rho $, with respect to each variable, can be solved by the following optimization subproblems:

1) $ \mathbf{X} $ subproblem for reconstructing sparse coefficients:
\begin{equation}
	\argmin_\mathbf{X}\frac{1}{2}\|\mathbf{D}\mathbf{X}-\mathbf{S}\|_F^2+\frac{\rho}{2}\|\mathbf{X}-\mathbf{Y}+\mathbf{U}\|_F^2.
\label{eq:convsparsecodeing_X_subproblem}
\end{equation}
The minimization of this quadratic function can be calculated by setting the associated derivative to zero, which leads to the following linear system:
\begin{equation}
	\left(\mathbf{D}^H\mathbf{D}+\rho\mathbf{I}\right)\mathbf{X}=\mathbf{D}^H\mathbf{S}+\rho(\mathbf{Y}-\mathbf{U}).
\label{eq:convsparsecodeing_X_subproblem_lin_sys}
\end{equation}
By applying Fast Fourier Transform (FFT) \cite{hong2020invariant}, this linear system can be efficiently solved in the frequency domain as:
\begin{equation}
	\left(\hat{\mathbf{D}}^H\hat{\mathbf{D}}+\rho\mathbf{I}\right)\hat{\mathbf{X}}=\hat{\mathbf{D}}^H\hat{\mathbf{S}}+\rho(\hat{\mathbf{Y}}-\hat{\mathbf{U}}),
\label{eq:convsparsecodeing_X_subproblem_lin_sys_DFT}
\end{equation}
where $ \hat{\mathbf{A}} $ denotes the DFT version of variable $ \mathbf{A} $. The solution of this linear system can be obtained by exploiting the Sherman-Morrison formula \cite{wohlberg2016efficient}.

2) $ \mathbf{Y} $ subproblem for calculating auxiliary variable:
\begin{equation}
	\argmin_\mathbf{Y}\lambda\|\mathbf{Y}\|_{1,1}+\frac{\rho}{2}\|\mathbf{X}-\mathbf{Y}+\mathbf{U}\|_F^2.
	\label{eq:convsparsecodeing_Y_subproblem}
\end{equation}
In real domain, this subproblem has a closed-form solution defined as:
\begin{equation}
	\mathbf{Y}:=\mathcal{S}_{\frac{\lambda}{\rho}}\left(\mathbf{X}+\mathbf{U}\right),
	\label{eq:convsparsecoding_softthres}
\end{equation}
where $ \mathcal{S}(\cdot) $ is the soft-thresholding function:
\begin{equation}
	\mathcal{S}_\gamma(\mathbf{A})=\mathrm{sign}(\mathbf{A})\odot\max(0,|\mathbf{A}|-\gamma),
\end{equation}
with the two element-wise operators $ \mathrm{sign}(\cdot) $ and $ |\cdot| $, and $ \odot $ denotes the element-wise multiplication.

However, the soft-thresholding function \eqref{eq:convsparsecoding_softthres} cannot be directly applied to the data in complex domain. Thus, in this paper, we propose the corresponding soft-thresholding function in complex domain, termed as \textit{complex soft-thresholding}. For $ \mathbf{A}\in\mathbb{C} $, we define $ |\mathbf{A}| $ in complex domain as $ \sqrt{\mathrm{real}(\mathbf{A})^2+\mathrm{imag}(\mathbf{A})^2} $, where $ \mathrm{real}(\cdot) $ and $ \mathrm{imag}(\cdot) $ extract real and imaginary parts of $ \mathbf{A} $, $\sqrt{\cdot}$ and $ (\cdot)^2 $ denote the square root and square, with element-wise manner, respectively. Furthermore, $\frac{\mathbf{A}}{|\mathbf{A}|}$ means element-wise division. Correspondingly, complex soft-thresholding function
is defined as:
\begin{equation}
	\mathcal{CS}_\gamma(\mathbf{A})=\frac{\mathbf{A}}{|\mathbf{A}|}\odot\max(0,|\mathbf{A}|-\gamma).
	\label{eq:cpx_soft_thresholding}
\end{equation}
The interpretation of complex soft-thresholding is demonstrated in \Fig \ref{fig:svt_csvt}. In complex domain, the operator preserves the direction of the complex vector and shrinks its associated amplitude by $ \gamma $. Specifically, within the range of $ \gamma $, the complex vector is projected to the original point and outside the circle, the phase of the complex vector is kept and its amplitude is shrinked.

\begin{figure}[!t]
    \centering
    \includegraphics[width=0.5\textwidth]{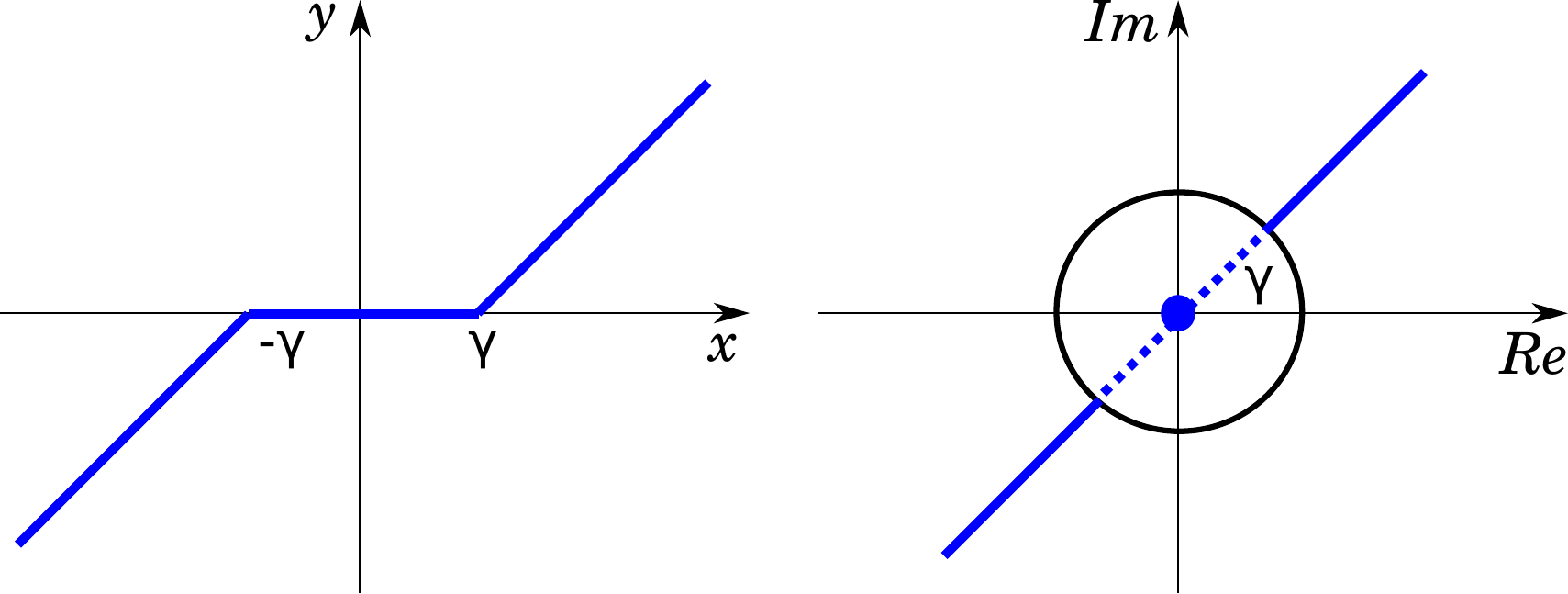}
    \caption{Soft-thresholding operator in real domain (Left). The proposed complex soft-thresholding operator for solving $ L_1 $ norm regularized optimization problem in complex domain. In complex domain, the operator preserves the direction of the complex vector and shrinks its associated amplitude by $ \gamma $. Specifically, within the range of $ \gamma $, the complex vector is projected to the original point and outside the circle, the phase of the complex vector is kept and its amplitude is shrinked.}
    \label{fig:svt_csvt}
\end{figure}

Actually, the closed-form solution of \eqref{eq:convsparsecodeing_Y_subproblem} in the complex domain is defined as\footnote{We will prove this conclusion in Appendix \ref{app:proof}.}:
\begin{equation}
\mathbf{Y}:=\mathcal{CS}_{\frac{\lambda}{\rho}}\left(\mathbf{X}+\mathbf{U}\right).
\label{eq:convsparsecoding_cpx_softthres}
\end{equation}

3) \textit{Multiplier update}: the dual variable $ \mathbf{U} $ can be updated as:
\begin{equation}
	\mathbf{U}:=\mathbf{U}+\mathbf{X}-\mathbf{Y}.
	\label{eq:sparse_coding_U_update}
\end{equation}
\subsection{Dictionary Update}
Fixing
the sparse coefficient maps\footnote{For convenience, we switch the indexing of $ \mathbf{x}_{m,k} $ as $ \mathbf{x}_{k,m} $} $ \{\mathbf{x}_{k,m}\} $ in (\ref{eq:cpx_conv_dl}), the dictionary update problem can be posed as:
\begin{equation}
	\argmin_{\mathbf{d}_m}\frac{1}{2}\sum_{k=1}^{K}\|\sum_{m=1}^{M}\mathbf{x}_{k,m}\ast\mathbf{d}_m-\mathbf{s}_k\|_2^2 \;\; {\rm s.t.} \;\; \|\mathbf{d}_m\|_2=1.
	\label{eq:conv_dict_update}
\end{equation}
In order to solve \eqref{eq:conv_dict_update} in the frequency domain, the $ L $ dimensional filter $ \mathbf{d}_m $ should be zero-padded to the common spatial dimensions of $ \mathbf{x}_{k,m} $ and $ \mathbf{s}_k $. $ \mathbf{P} $ denotes the projection operator that zeros the regions of the filters outside the desired support. We introduce the indicator function $ \iota_{C_{PN}} $
as: \begin{equation}
\begin{aligned}
	\iota_{C_{PN}}(\mathbf{d})=\begin{cases}
	0 \quad  & \mathrm{if} ~~\mathbf{d}\in  C_{PN}\\
	\infty  & \mathrm{if} ~~\mathbf{d} \notin  C_{PN}
	\end{cases},
\end{aligned}
\end{equation}
and the constraint set $ C_{PN} $ is denoted as:
\begin{equation}
	C_{PN}=\{\mathbf{d}\in\mathbb{C}^N: (\mathbf{I}-\mathbf{P}\mathbf{P}^T)\mathbf{d}=0,~\|\mathbf{d}\|_2=1\}.
\end{equation}
Similar to the formulation in sparse coding, the problem \eqref{eq:conv_dict_update} can be expressed as:
\begin{equation}
	\argmin_\mathbf{d}\frac{1}{2}\|\mathbf{X}\mathbf{d}-\mathbf{s}\|_2^2+\iota_{C_{PN}}(\mathbf{d}),
\label{eq:compact_multidict_update}
\end{equation}
where $ \mathbf{s} $ and $ \mathbf{d} $ are defined as:
\begin{equation}
\mathbf{s}=\begin{bmatrix}
\mathbf{s}_0\\
\vdots\\
\mathbf{s}_K
\end{bmatrix}\in\mathbb{C}^{NK} \quad
\mathbf{d} = \begin{bmatrix}
\mathbf{d}_0\\
\vdots\\
\mathbf{d}_M
\end{bmatrix}\in\mathbb{C}^{NM}.
\end{equation}
By introducing dual variable $ \mathbf{u} $, penalty parameter $ \sigma $ and auxiliary variable $ \mathbf{y} $, the corresponding scaled augmented Lagrangian function of \eqref{eq:compact_multidict_update} is defined as:
\begin{equation}
	L_\sigma(\mathbf{d},\mathbf{y},\mathbf{u})=\frac{1}{2}\|\mathbf{X}\mathbf{d}-\mathbf{s}\|_2^2+\iota_{C_{PN}}(\mathbf{y})+\frac{\sigma}{2}\|\mathbf{d}-\mathbf{y}+\mathbf{u}\|_2^2
	\label{eq:convsparse_dict_update_aug_lag}
\end{equation}
By applying ADMM, the minimization of $ L_\sigma $, with respect to each variable, can be solved by the following optimization subproblems:

1) $ \mathbf{d} $ subproblem for calculating complex convolutional filters:
\begin{equation}
	\argmin_\mathbf{d}\frac{1}{2}\|\mathbf{X}\mathbf{d}-\mathbf{s}\|_2^2+\frac{\sigma}{2}\|\mathbf{d}-\mathbf{y}+\mathbf{u}\|_2^2.
\end{equation}
Similar with the subproblem in \eqref{eq:convsparsecodeing_X_subproblem}, this minimization problem can be reformulated by solving the linear system as:
\begin{equation}
	\left(\mathbf{X}^H\mathbf{X}+\sigma\mathbf{I}\right)\mathbf{d}=\mathbf{X}^H\mathbf{s}+\sigma\left(\mathbf{y}-\mathbf{u}\right),
\end{equation}
which can also be transformed into the frequency domain as follows:
\begin{equation}
	\left(\hat{\mathbf{X}}^H\hat{\mathbf{X}}+\sigma\mathbf{I}\right)\hat{\mathbf{d}}=\hat{\mathbf{X}}^H\hat{\mathbf{s}}+\sigma\left(\hat{\mathbf{y}}-\hat{\mathbf{u}}\right).
	\label{eq:ComCSC_d_update}
\end{equation}
This linear system can be solved by Iterated Sherman-Morrison algorithm \cite{wohlberg2016efficient}.

2) $ \mathbf{y} $ subproblem for calculating the auxiliary variable:
\begin{equation}
	\argmin_\mathbf{y}\iota_{C_{PN}}(\mathbf{y})+\frac{\sigma}{2}\|\mathbf{d}-\mathbf{y}+\mathbf{u}\|_2^2,
\end{equation}
which is the proximal operator of the indicator function $ \iota_{C_{PN}} $ at the point $ \mathbf{d}+\mathbf{u} $. In particular, this proximal operator can be determined as:
\begin{equation}
y =	\mathrm{prox}_{\iota_{C_{PN}}}(\mathbf{d+u})=\frac{\mathbf{P}\mathbf{P}^T(\mathbf{d+u})}{\|\mathbf{P}\mathbf{P}^T(\mathbf{d+u})\|_2}.
	\label{eq:ComCSC_y_update}
\end{equation}

3) \textit{Multiplier update}: the dual variable $ \mathbf{u} $ can be updated by:
\begin{equation}
	\mathbf{u}:=\mathbf{u}+\mathbf{d}-\mathbf{y}.
	\label{eq:ComCSC_u_update}
\end{equation}

To this end, based on the previous optimization procedure, convolutional filters $ \mathbf{d}_m $ can be learned from the training interferograms. The algorithm for CCDL (\ref{eq:cpx_conv_dl}) is summarized in Algorithm \ref{ag:ADMM_solver_ComSCS}.


\begin{algorithm}[!t]
	\caption{CCDL solved by alternate minimization
	}
	\begin{algorithmic}[1]
		\Require Clean interferograms used for training: $ \{\mathbf{s}_k\}_{k=1}^K$
		\State Initialize $ \{\mathbf{d}_m \} $, $ L $, $ \lambda $,
		$\rho$, $\sigma$.
		\While { not convergent}
		\State \textbf{Sparse Coding}:
		\State Update $ \mathbf{X} $ by solving the subproblem in \eqref{eq:convsparsecodeing_X_subproblem_lin_sys_DFT}.
		\State Update $ \mathbf{Y} $ by complex soft-thresholding in \eqref{eq:convsparsecoding_cpx_softthres}.
		\State Update $ \mathbf{U} $ by \eqref{eq:sparse_coding_U_update}.
		\State \textbf{Dictionary Update}:
		\State Update $ \mathbf{d} $ by solving the subproblem in \eqref{eq:ComCSC_d_update}.
		\State Update $ \mathbf{y} $ by utilizing proximal operator of indicator function in \eqref{eq:ComCSC_y_update}.
		\State {Update $ \mathbf{u} $ by \eqref{eq:ComCSC_u_update}.}
		\EndWhile
		\Ensure $ \{\mathbf{d}_m \}_{m=1}^M $
	\end{algorithmic}
	\label{ag:ADMM_solver_ComSCS}
\end{algorithm}

\subsection{Computational Complexity}
\label{sec:ccdl_complexity}

The complexities of CCDL and the normal convolutional dictionary learning\cite{wohlberg2016efficient} are actually of the same order and listed below item by item.
\begin{itemize}
    \item \eqref{eq:convsparsecodeing_X_subproblem_lin_sys_DFT}: $\mathcal{O}(KMN) + \mathcal{O}(KMN\log(N))$.
    \item \eqref{eq:convsparsecoding_cpx_softthres}: $\mathcal{O}(KMN)$.
    \item \eqref{eq:sparse_coding_U_update}: $\mathcal{O}(KMN)$.
    \item \eqref{eq:ComCSC_d_update}: $\mathcal{O}(K^2MN) + \mathcal{O}(KMN\log(N))$.
    \item \eqref{eq:ComCSC_y_update}: $\mathcal{O}(KMN)$.
    \item \eqref{eq:ComCSC_u_update}: $\mathcal{O}(KMN)$.
\end{itemize}
Among the six steps of CCDL, \eqref{eq:ComCSC_d_update} has the dominant computational complexity. Therefore, the complexity of CCDL (Algorithm \ref{ag:ADMM_solver_ComSCS}) can be summarized as
\[\mathcal{O}\Big(T\big(K^2MN + KMN\log(N)\big)\Big ),\]where $T$ is the number of iterations that Algorithm \ref{ag:ADMM_solver_ComSCS} takes before convergence. Typically, $T = 200, K= 80, M = 96, N = 100 \times 100$, and Algorithm \ref{ag:ADMM_solver_ComSCS} takes around $ 3 $ hours to stop\footnote{The codes are implemented by MATLAB and the experiments are conducted on a PC with Intel® Core™ i7-8850H CPU @ 2.60GHz.}. However, we have to note that a common set of  filters  $\{\mathbf{d}_m \}_{m=1}^M$ can be used for all images in a dataset, not only for one image. Thus, we conduct CCDL only once and store the dictionary $\{\mathbf{d}_m \}$ before denoising since CCDL is expensive compared with ComCSC.

\section{Complex Convolutional Sparse Coding with Gradient Regularization}\label{sc:ccdl_GR}

Given the noisy interferogram $ \tilde{\mathbf{s}} $ and the learned filters $\{\mathbf{d}_m\}_{m=1}^M$, the sparse coefficient maps $ \{\tilde{\mathbf{x}}_m\} $ can be obtained by applying CBPDN and the restored interferogram $ \check{\mathbf{s}} $ can be described as the convolutional sparse representation, \textit{i.e.} $\check{\mathbf{s}}:=\sum_{m=1}^{M}\mathbf{d}_m\ast\mathbf{\tilde{\mathbf{x}}_m} $. In the following, we term this method as Complex Convolutional Sparse Coding  (\textit{ComCSC}).

\subsection{ComCSC-GR}

As introduced in \cite{wohlberg2016convolutional}, convolutional sparse representations can provide good restorations for high-pass components of images.
In order to also well reconstruct the low-pass components of images, gradient regularization on the sparse coefficient maps can be further exploited.
In this paper, aiming for adapting to both the high-pass and low-pass components of interferograms, we also extend ComCSC with the regularization of gradients, denoted as \textit{ComCSC-GR} and investigate its performance for interferometric phase restoration. Such model can be represented as\footnote{ In ComCSC-GR (\ref{eq:cbpdn_gr}), we should use notation $\tilde{\mathbf{s}}$ and $\{\tilde{\mathbf{x}}_m\}$ to represent noisy interferograms and the corresponding  sparse codes. In this section, we use ${\mathbf{s}}$ and $\{{\mathbf{x}}_m\}$ for simplicity. }:
\begin{equation}
\begin{aligned}
	\argmin_{\{\mathbf{x}_m \}}&\frac{1}{2}\|\sum_m\mathbf{d}_m\ast\mathbf{x}_m - \mathbf{s} \|_2^2 + \lambda\sum_m\|\mathbf{x}_m\|_1 +\\ &\frac{\mu}{2}\sum_m\|\sqrt{(\mathbf{g}_0\ast\mathbf{x}_m)^2+(\mathbf{g}_1\ast\mathbf{x}_m)^2}\|_2^2,
\end{aligned}
\label{eq:cbpdn_gr}
\end{equation}
where $ \mathbf{g}_0 $ and $ \mathbf{g}_1 $ are the filters which compute the gradients along image rows and columns, respectively. By introducing linear operators $ \mathbf{G}_0 $ and $ \mathbf{G}_1 $, \textit{i.e.} $ \mathbf{G}_l\mathbf{x}_m=\mathbf{g}_l\ast\mathbf{x}_m ~(l = 0,1) $, problem \eqref{eq:cbpdn_gr} can be rewritten as:
\begin{equation}
	\argmin_{\mathbf{x}}\frac{1}{2}\|\mathbf{D}\mathbf{x}-\mathbf{s}\|_2^2+\lambda\|\mathbf{x}\|_1+\frac{\mu}{2}(\|\boldsymbol{\Gamma}_0\mathbf{x}\|_2^2+\|\boldsymbol{\Gamma}_1\mathbf{x}\|_2^2),
\end{equation}
where
\begin{equation}
	\boldsymbol{\Gamma}_l=\begin{bmatrix}
	\mathbf{G}_l & 0 & \cdots \\
	0 & \mathbf{G}_l & \cdots \\
	\vdots & \vdots & \ddots
	\end{bmatrix}\in\mathbb{C}^{MN\times MN} \quad
	\mathbf{x} = \begin{bmatrix}
	\mathbf{x}_0\\
	\vdots\\
	\mathbf{x}_M
	\end{bmatrix}\in\mathbb{C}^{NM}.
\end{equation}
This problem can also be solved by introducing dual variable $ \mathbf{u} $, auxiliary variable $ \mathbf{y} $ and the associated penalty parameter $ \rho $ in the ADMM optimization procedure. The corresponding
scaled
augmented Lagrangian function is defined as:
\begin{equation}
\begin{aligned}
	L_\rho(\mathbf{x},\mathbf{y},\mathbf{u})=&\frac{1}{2}\|\mathbf{D}\mathbf{x}-\mathbf{s}\|_2^2+\lambda\|\mathbf{y}\|_1+\frac{\mu}{2}(\|\boldsymbol{\Gamma}_0\mathbf{x}\|_2^2+\|\boldsymbol{\Gamma}_1\mathbf{x}\|_2^2)\\
	&+\frac{\rho}{2}\|\mathbf{x}-\mathbf{y}+ \mathbf{u}\|_2^2.
\end{aligned}
\end{equation}
The subproblems of $ L_\rho $ with respect to each variable can be written as follows:

1) $ \mathbf{x} $ subproblem for reconstructing the sparse coefficients:
\begin{equation}
\begin{aligned}
\argmin_{\mathbf{x}}&\frac{1}{2}\|\mathbf{D}\mathbf{x}-\mathbf{s}\|_2^2+\frac{\mu}{2}(\|\boldsymbol{\Gamma}_0\mathbf{x}\|_2^2+\|\boldsymbol{\Gamma}_1\mathbf{x}\|_2^2)\\
&+\frac{\rho}{2}\|\mathbf{x}-\mathbf{y}+\mathbf{u}\|_2^2.
\end{aligned}
\end{equation}
The solution of this subproblem can be obtained by solving the following linear system in the frequency domain:
\begin{equation}
\begin{aligned}
    \left(\hat{\mathbf{D}}^H\hat{\mathbf{D}}+\mu\hat{\boldsymbol{\Gamma}}_0^H\hat{\boldsymbol{\Gamma}}_0+\mu\hat{\boldsymbol{\Gamma}}_1^H\hat{\boldsymbol{\Gamma}}_0+\rho\mathbf{I}\right)\hat{\mathbf{x}}\\
    =\hat{\mathbf{D}}^H\hat{\mathbf{s}}+\rho(\hat{\mathbf{y}}-\hat{\mathbf{u}}).
\end{aligned}
	\label{eq:ComCSC-GR_x_update}
\end{equation}

2) $ \mathbf{y} $ subproblem for calculating the auxiliary variable:
\begin{equation}
	\argmin_\mathbf{y}\lambda\|\mathbf{y}\|_1+\frac{\rho}{2}\|\mathbf{x}-\mathbf{y}+\mathbf{u}\|_2^2.
	\label{eq:ComCSC-GR_y_update}
\end{equation}
Identically to \eqref{eq:convsparsecodeing_Y_subproblem}, this subproblem can be solved by the soft-thresholding in complex domain.

3) \textit{Multiplier update}: the dual variable $ \mathbf{u} $ can be updated by:
\begin{equation}
	\mathbf{u}:=\mathbf{u}+\mathbf{x}-\mathbf{y}.
	\label{eq:ComCSC-GR_u_update}
\end{equation}

The corresponding pseudocode of ComCSC and  ComCSC-GR are summarized in Algorithm \ref{ag:ADMM_solver_ComCSC} and \ref{ag:ADMM_solver_ComSCS-GR} respectively.

{
\begin{algorithm}[!t]
	\caption{ComCSC solved by ADMM}
	\begin{algorithmic}[1]
		\Require $ \mathbf{s}$, $ \{\mathbf{d}_m \} $
		\State Initialize $ \lambda $, $ \mu $, $ \rho $, let $\mathbf{g}_0=0$, $\mathbf{g}_1=0$.
		\While {not convergent}
		\State Update $ \mathbf{x} $ by
		\eqref{eq:ComCSC-GR_x_update}.
		\State Update $ \mathbf{y} $ by
		\eqref{eq:ComCSC-GR_y_update}.
		\State Update $ \mathbf{u} $ by \eqref{eq:ComCSC-GR_u_update}.
		\EndWhile
		\Ensure
$\check{\mathbf{s}}:=\sum_{m=1}^{M}\mathbf{d}_m\ast\mathbf{{\mathbf{x}}_m} $
	\end{algorithmic}
	\label{ag:ADMM_solver_ComCSC}
\end{algorithm}

}

\begin{algorithm}[!t]
	\caption{ComCSC-GR solved by ADMM}
	\begin{algorithmic}[1]
		\Require $ \mathbf{s}$, $ \{\mathbf{d}_m \} $, $\mathbf{g}_0$, $\mathbf{g}_1$
		\State Initialize $ \lambda $, $ \mu $, $ \rho $.
		\While {not convergent}
		\State Update $ \mathbf{x} $ by solving the subproblem in \eqref{eq:ComCSC-GR_x_update}.
		\State Update $ \mathbf{y} $ by complex soft-thresholding in \eqref{eq:ComCSC-GR_y_update}.
		\State Update $ \mathbf{u} $ by \eqref{eq:ComCSC-GR_u_update}.
		\EndWhile
		\Ensure
$\check{\mathbf{s}}:=\sum_{m=1}^{M}\mathbf{d}_m\ast\mathbf{{\mathbf{x}}_m} $
	\end{algorithmic}
	\label{ag:ADMM_solver_ComSCS-GR}
\end{algorithm}

In a similar vein, given the noisy interferogram $ \tilde{\mathbf{s}} $ and the filters learned by CCDL, the sparse coefficient maps $ \{\tilde{\mathbf{x}}_m\} $ can be obtained by applying ComCSC-GR and the restored interferogram $ \check{\mathbf{s}} $ can be described as the convolutional sparse representation, \textit{i.e.} $\check{\mathbf{s}}:=\sum_{m=1}^{M}\mathbf{d}_m\ast\mathbf{\tilde{\mathbf{x}}_m} $. Comparing to ComCSC, its gradient-regularized version can not only maintain the performance of the reconstruction results on the high-pass components, but also the low-pass components can be well restored. Such advantages can be beneficial for the InSAR phase restoration, since the observed scene is usually characterized by both low- and high-frequency phase variations.

\subsection{Computational Complexity}
\label{sec:csc_complexity}
Similar with the argument in Section \ref{sec:ccdl_complexity}, the complexities of ComCSC-GR and CSC-GR\cite{wohlberg2016convolutional} are in the same order. We list them item by item here\footnote{The bounds given here are theoretical upper bounds for the worse cases. In practice, the computational cost is also related with the filter size $L$: smaller $L$ leads to a smaller cost than the upper bound. This phenomenon is due to the implementation of zero-padding and FFT. We will discuss this point in the text following \Fig \ref{fg:effi_sty_filter_num_sz}. }.
\begin{itemize}
    \item \eqref{eq:ComCSC-GR_x_update}: $\mathcal{O}(MN) + \mathcal{O}(MN\log(N))$,
    \item \eqref{eq:ComCSC-GR_y_update}: $\mathcal{O}(MN)$,
    \item \eqref{eq:ComCSC-GR_u_update}: $\mathcal{O}(MN)$,
\end{itemize}
where \eqref{eq:ComCSC-GR_x_update} has the dominant computational complexity. Therefore, the complexity of ComCSC-GR (Algorithm \ref{ag:ADMM_solver_ComSCS-GR}) can be summarized as
\[\mathcal{O}\Big(T MN\log(N)\Big ),\]where $T$ is the number of iterations that Algorithm \ref{ag:ADMM_solver_ComSCS-GR} takes before convergence. Typically, $T = 150, M = 96, N = 256 \times 256$, and Algorithm \ref{ag:ADMM_solver_ComSCS-GR} takes $ 140 $ seconds to stop.

\begin{figure}[!t]
	\centering
	\includegraphics[width=0.5\textwidth]{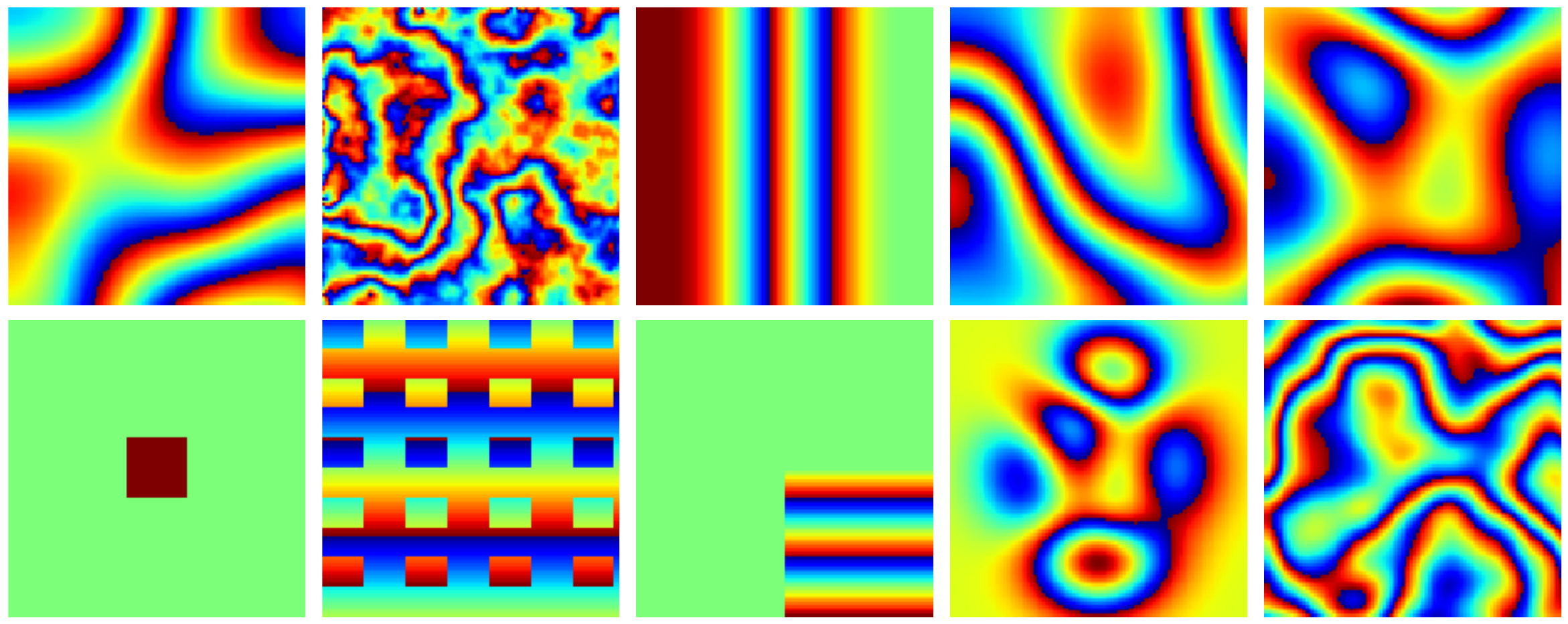}
	\caption{Interferogram examples utilized for complex convolutional dictionary learning.}
	\label{fg:gt_simu_cpxs}
\end{figure}

\begin{figure}[!t]
	\centering
	\includegraphics[width=0.5\textwidth]{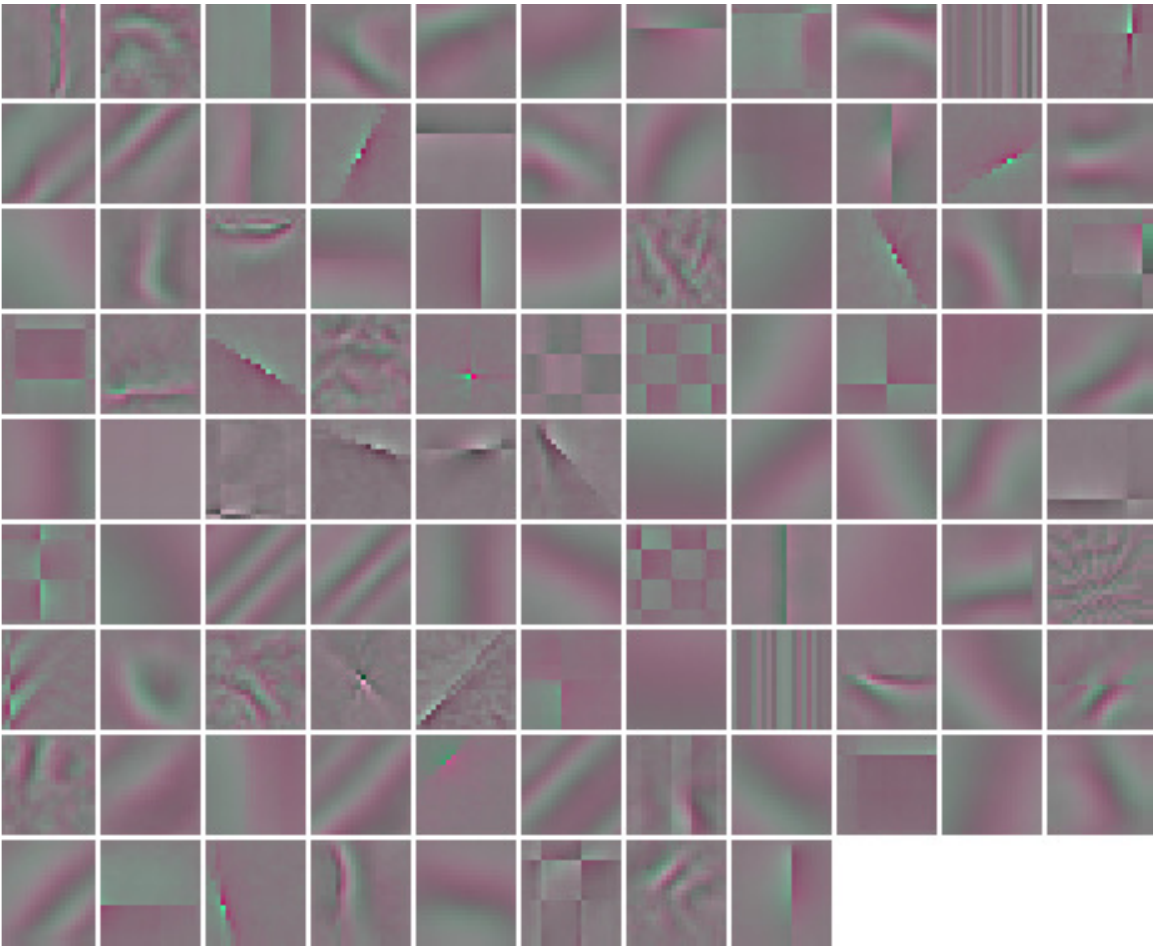}
	\caption{$ 96 $ convolutional filters learned on the ground truth interferograms. For the visualization of complex data, the real and imaginary parts are mapped into red and green colors, respectively.}
	\label{fg:lrn_conv_filters_96}
\end{figure}

\begin{figure}[!t]
	\centering
	\includegraphics[width=0.23\textwidth]{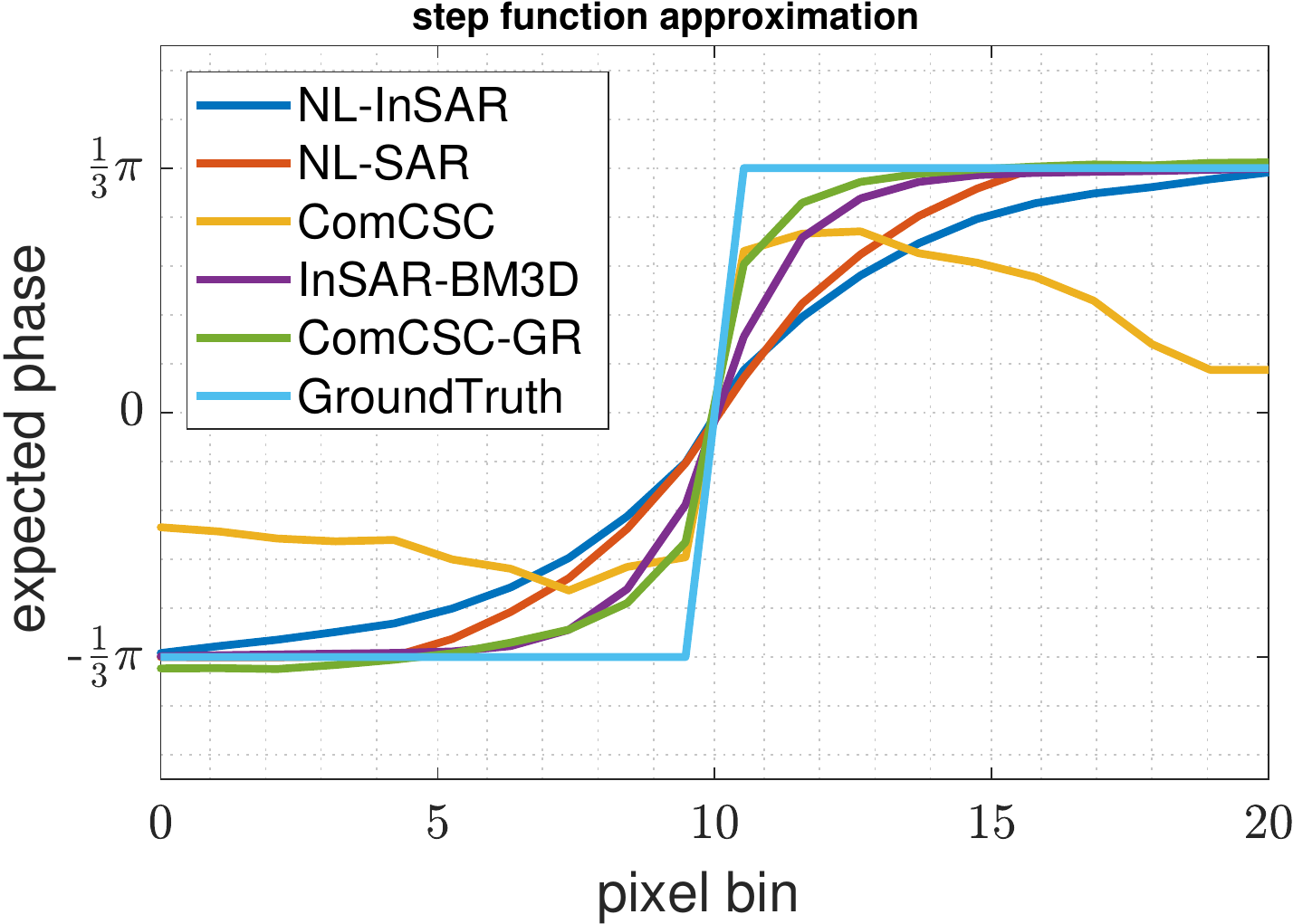}~
	\includegraphics[width=0.23\textwidth]{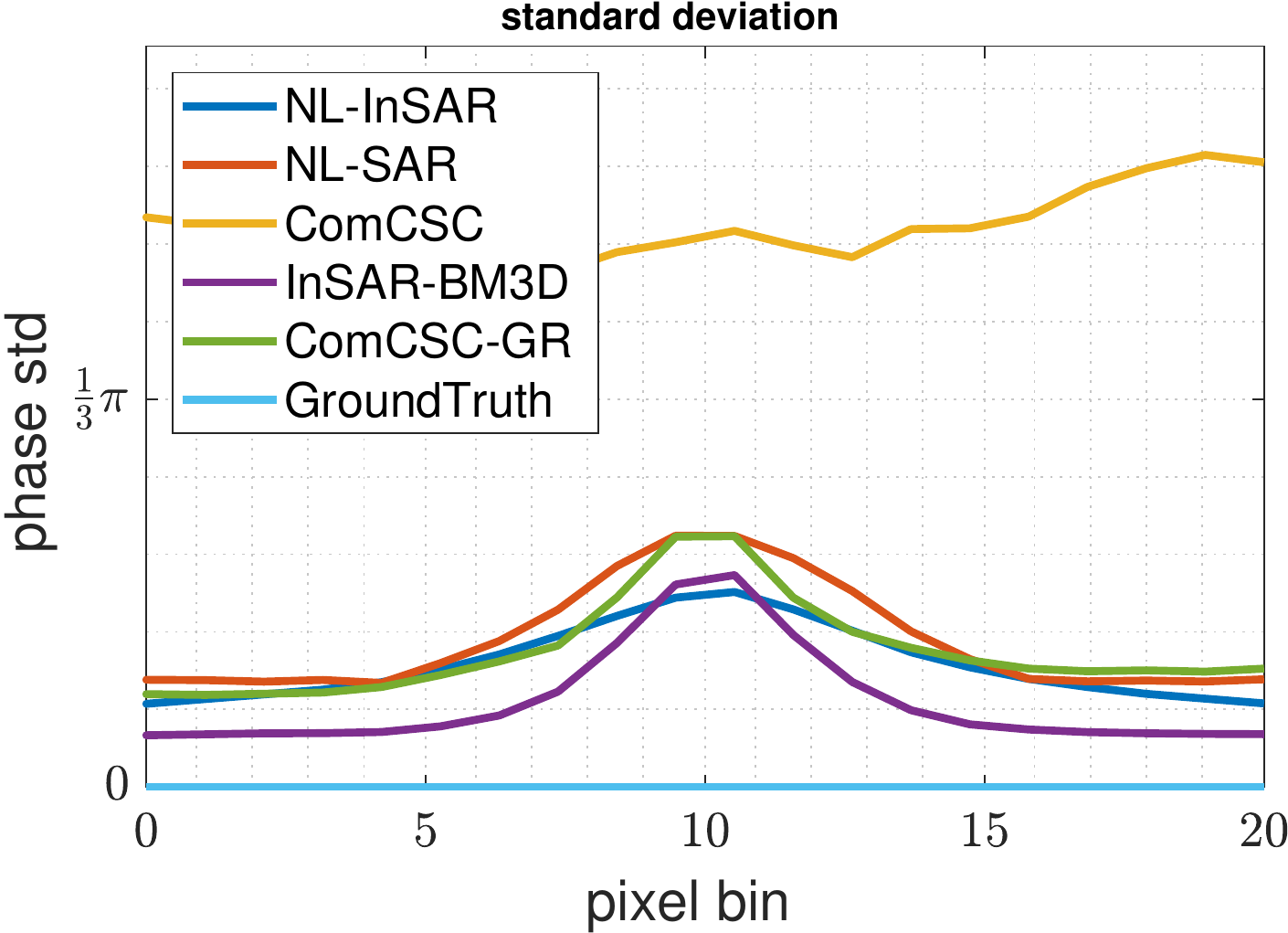}
	\caption{$ 10,000 $ Monte-Carlo simulations for evaluating the compared methods on the expected values and standard deviations of step function approximation. The amplitude is constant and the coherence value is set as $ 0.3 $.}
	\label{fg:step_func_appro}
\end{figure}

\section{Experimental Results}\label{sc:experiments}

\subsection{Simulations}
The covariance matrix of two correlated complex normal distributed scatterers of two Single Look Complex images (SLC) is defined as:
\begin{equation}
	\mathbf{C}=\begin{bmatrix}
	a^2 & a^2\gamma e^{\mathrm{j}\phi} \\
	a^2\gamma e^{-\mathrm{j}\phi} & a^2
	\end{bmatrix},
	\label{eq:37}
\end{equation}
where $ a $ is the amplitude, $ \phi $ denotes the interferometric phase and $ \gamma $ is the coherence. Given two independent complex normal distributed scatterers $ r_1 $ and $ r_2 $ of zero mean and unit variance, the synthetic samples $ u_1 $ and $ u_2 $ with the desired correlation can be obtained by
\begin{equation}
	\begin{bmatrix}
	u_1\\
	u_2
	\end{bmatrix}=\mathbf{L}\begin{bmatrix}
	r_1\\
	r_2
	\end{bmatrix}=a\begin{bmatrix}
	1 & 0 \\
	\gamma e^{-\mathrm{j}\phi} & \sqrt{1-\gamma^2}
	\end{bmatrix}\begin{bmatrix}
	r_1\\
	r_2
	\end{bmatrix},
	\label{eq:38}
\end{equation}
where $ \mathbf{L} $ represents the Cholesky decomposition of $ \mathbf{C} $. Then, the interferogram can be available by $ s=u_1\times \mathrm{conj}(u_2)$, where $ \mathrm{conj}(\cdot)$ is the conjugation operator.

In order to learn the convolutional filters $ \mathbf{d}_m $, we simulate a benchmark dataset of $ 80 $ interferograms $ \{\mathbf{s}_k\}_{k=1}^{K=80}$ with different patterns. Some examples are illustrated in \Fig \ref{fg:gt_simu_cpxs}. The spatial size of filters is set as $ 20\times20 $ pixels and the number of filters is $ 96 $. The parameter $ \lambda $ can be set to a small value, since the dataset utilized here is noiseless. Thus, we selected it as  $ 0.2 $. Based on CCDL, the learned filters are shown in \Fig \ref{fg:lrn_conv_filters_96}. For the visualization of complex data, the real and imaginary parts are mapped into red and green colors, respectively. It can be obviously observed that reliable interferometric phase components, such as curves, lines, rectangles, and smooth planes, can be learned based on the proposed method. The following experiments are based on such learned filters.

For evaluating the performance of the proposed method, we first compare it with other state-of-the-art methods by restoring a step function.
This experiment can give us an intuition about the various filters' capabilities of resolution and detailed structure preservation.
$ 10,000 $ Monte-Carlo simulations of the phase step function from $ -\frac{\pi}{3} $ to $ \frac{\pi}{3} $, with the coherence of $ 0.3 $ and the constant amplitude, are made for the experiment. The parameters of the referenced methods are introduced as follows:
\begin{itemize}
	\item The window size for Boxcar is $ 5\times5 $.
	\item The patch size and $ \alpha $ in Goldstein filter \cite{goldstein1998radar} is $ 16 $ and $ 0.9 $, respectively.
	\item The parameters are automatically chosen as stated in NL-SAR \cite{deledalle2015nl}.
	\item The search window and patch sizes in NL-InSAR \cite{deledalle2011nl} are $ 21 $ and $ 5 $, respectively.
	\item The parameters of InSAR-BM3D \cite{sica2018insar} are set the same as the original paper.
\end{itemize}

As shown in \Fig \ref{fg:step_func_appro} (Left), under the coherence of $ 0.3 $, the proposed ComCSC-GR can achieve the best approximation of the step function. ComCSC-GR outperforms the state-of-the-art InSAR filters, \textit{i.e.} InSAR-BM3D, in terms of the detail preserving of fringes. Although phase jump can also be well modeled by the ComCSC, the homogeneous areas cannot be restored by it. The reason is that high-pass components can be well reconstructed by convolutional sparse coding, with the sacrifice of homogeneous components. In comparison, with the gradient regularization of sparse coefficient maps, low-pass components can be very well preserved by ComCSC-GR. Therefore, both the homogeneous and phase jump areas can be well modeled by ComCSC-GR. Besides, both NL-InSAR and NL-SAR demonstrate their weakness on the edge preservation due to their intentionally oversmooth behaviors and no guidance filtering based on amplitude change in such areas. As illustrated in its right, without the low-pass regularization, the ComCSC performs the worst on the variances of the restoration. For the homogeneous area, the InSAR-BM3D achieves the best performance in terms of the stability for the restoration. For the phase jump area, owing to the proficient detail preservation, ComCSC-GR narrows the variances better than the NL-SAR.

\begin{figure}[!t]
	\centering
	\includegraphics[width=0.5\textwidth]{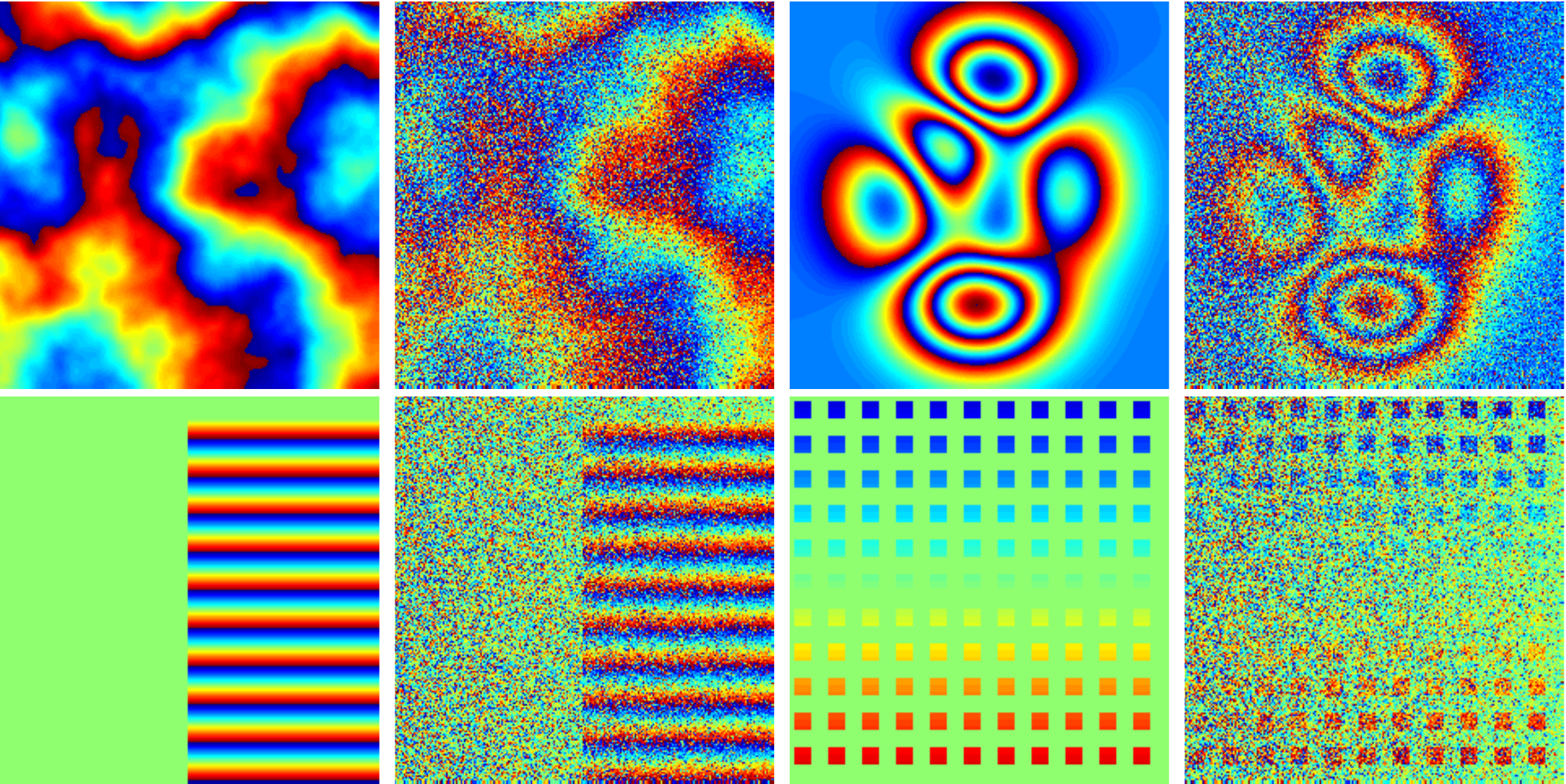}
	\caption{Simulated ground truth interfergrams, \textit{i.e.} mountain, peaks, shear plane and squares, and their noisy version. The coherence grows linearly from $ 0.3 $ (leftmost) to $ 0.9 $ (rightmost).}
	\label{fg:simu_gt_cpx}
\end{figure}

\begin{figure*}[!t]
	\centering
	\includegraphics[width=0.98\textwidth]{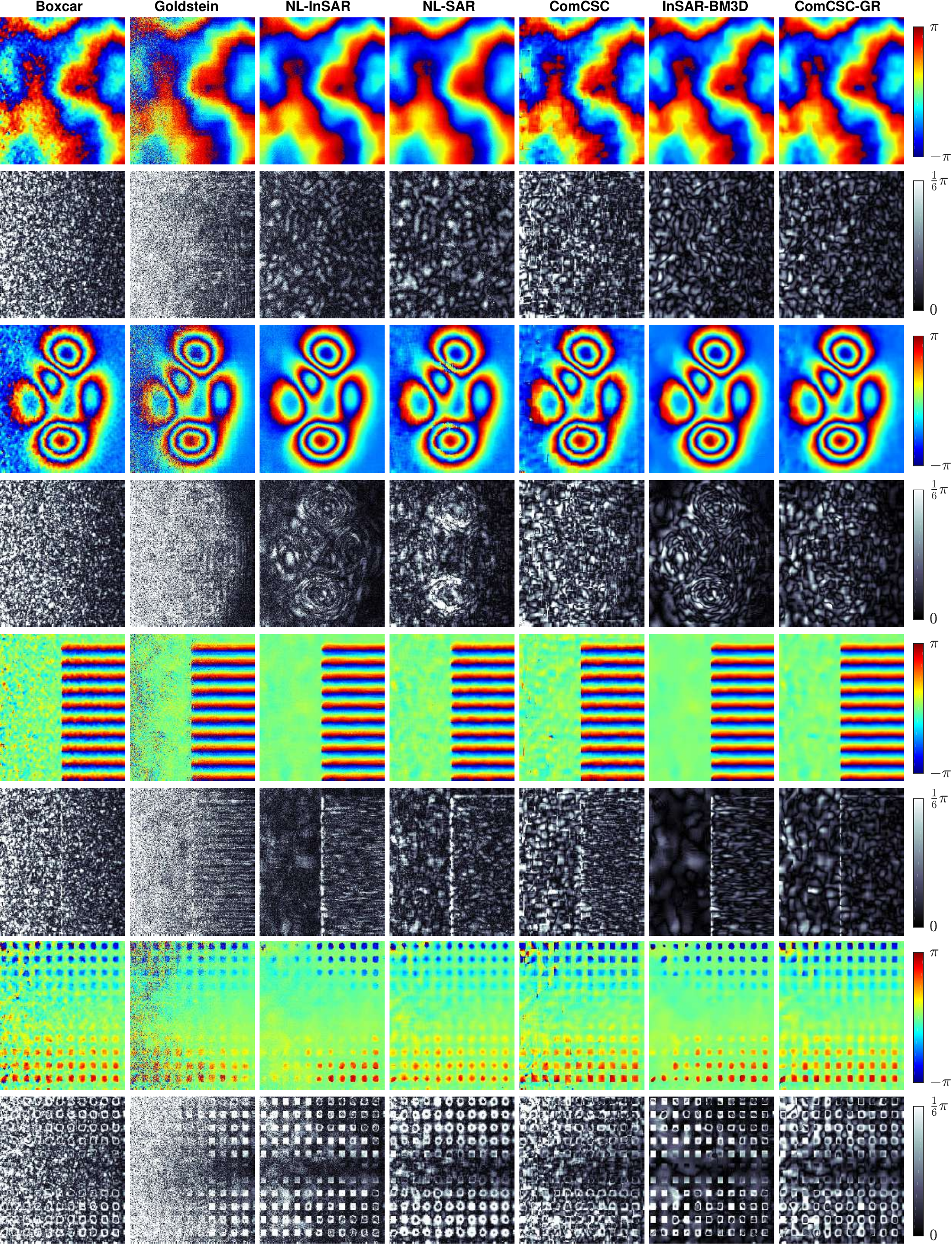}
	\caption{Filtered interferograms based on several comparing algorithms and their residual phases referenced with the ground truth images. }
	\label{fg:simu_4cpx_comp_results}
\end{figure*}

\begin{figure*}[!t]
	\centering
	\includegraphics[width=\textwidth]{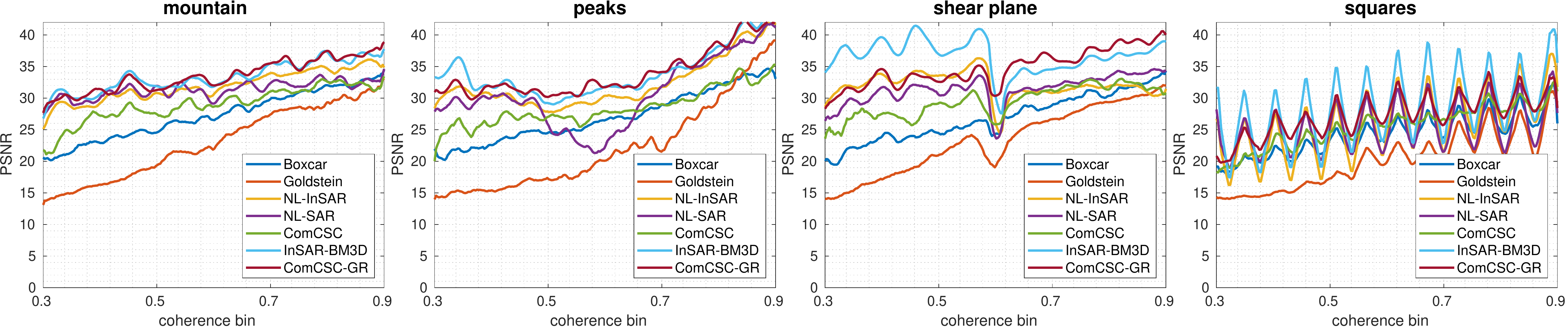}
	\caption{PSNR values calculated on the restored images in a sliding window manner with respect to the coherence values.}
	\label{fg:psnr_4cpx_wrt_coh}
\end{figure*}

For a more exhaustive analysis, we generate four different images with $ 256\times256 $ pixel size with typical interferometric patterns: 1) \textit{moutain} mimics the interferometric phases in mountainous terrains, 2) \textit{peaks} simulates a complex scenery with varied geometry, 3) \textit{shear plane} includes constant phase and rapid phase variation, 4) \textit{squares} replicates phase jump occurred in urban areas. Based on \eqref{eq:37} and \eqref{eq:38}, the noisy interferograms with spatially varied coherence map are generated. The coherence grows linearly from $ 0.3 $ (leftmost) to $ 0.9 $ (rightmost). The ground truth and noisy interferograms are depicted in \Fig \ref{fg:simu_gt_cpx}.

For a visual comparison, all the reconstruction results and the associated residual phases are shown in \Fig \ref{fg:simu_4cpx_comp_results}. Besides, we also make a quantitative evaluation based on the metric of \textit{peak signal-to-noise ratio} (PSNR) \cite{hongxing2015interferometric} defined as:
\begin{equation}
\mathrm{PSNR}:=10\log_{10}\frac{4N\pi^2}{\|\mathrm{angle}(\mathrm{conj}(\mathbf{s})\odot\check{\mathbf{s}})\|_F^2}[\mathrm{dB}].
\end{equation}
All the PSNR values of the overall images are illustrated in Table \ref{tb:simu_PSNR}. Moreover, as shown in \Fig \ref{fg:psnr_4cpx_wrt_coh}, we plot the PSNR values calculated on the restored images in a sliding window manner with respect to the coherence values. As observed in \Fig \ref{fg:simu_4cpx_comp_results}, for high coherence areas, all the methods can give reliable restoration results, since the residual phase maps are dark on such areas. However, the performances of all the methods are varied in the low coherence areas. For example, Boxcar and Goldstein cannot recover the interferometric patterns in such areas. All the other methods demonstrate better robustness in terms of coherence variation. Consistent with the previous experiment, oversmooth phenomenon can be found in the phase jump areas of nonlocal-based approaches, \textit{i.e.} NL-InSAR, NL-SAR and InSAR-BM3D. For example, as shown by the residual maps of peaks, contours of edges are displayed in the results of NL-InSAR, NL-SAR and InSAR-BM3D. Similarly, the phase change line can be also clearly seen in the reconstructed results of shear plane, as indicated in the corresponding residual maps.
In contrast, those edge areas can be better preserved by ComCSC-GR, also indicated by its higher PSNR values in comparison to all other methods, as shown in \Fig \ref{fg:psnr_4cpx_wrt_coh}.
From the numerical analysis in Table \ref{tb:simu_PSNR}, among all the methods, ComCSC-GR can achieve the best performance not only for the homogeneous pattern e.g. mountain, but also the heterogeneous pattern e.g. squares. It is worth noting that, for the patterns except squares, the performances of ComCSC-GR and InSAR-BM3D are comparable. However, for the sharply changing area (squares), ComCSC-GR can surpass the other nonlocal-based methods with a large margin (around 2dB), demonstrating the superiority of the proposed ComCSC-GR in modeling the sharp changes of fringes based on its learned convolutional kernels. To further investigate the performances of the comparing methods, the filtered interferometric results are transformed into the unwrapped phases with the same unwrapping method. As demonstrated in \Fig \ref{fg:simu_unwrapping}, the staircase effect can be evidently observed in both NL-InSAR and NL-SAR methods, especially in the examples of peaks and shear plane. Compared with InSAR-BM3D, continuous variation of real phases can be smoothly reconstructed by ComCSC-GR. As illustrated in the linear increasing part of shear plane and the bell-shape area of peaks, the continuous variation of the real phases can be more smoothly reconstructed by ComCSC-GR than InSAR-BM3D. Moreover, for the example of squares, with the powerful capability of detail preservation, ComCSC-GR can reconstruct more correct squares compared with the other methods.

\begin{table}[!t]
	\centering
	\caption{Numerical analysis of the compared methods on the simulated dataset}
	\label{tb:simu_PSNR}
	\begin{tabular}{*{4}{c|} c}
		\hline
		\hline
		& \multicolumn{4}{|c}{PSNR[dB]} \\
		\cline{2-5}
		& mountain & peaks & shear plane & squares \\
		\hline
		Boxcar & $ 25.61 $ & $ 25.35 $ & $ 25.21 $ & $ 22.57 $ \\
		\hline
		Goldstein & $ 19.79 $ & $ 18.69 $ & $ 20.29 $ & $ 18.04 $ \\
		\hline
		NL-InSAR & $ 31.26 $ & $ 31.00 $ & $ 30.58 $ & $ 21.79 $ \\
		\hline
		NL-SAR & $ 31.08 $ & $ 27.62 $ & $ 29.82 $ & $ 22.80 $ \\
		\hline
		ComCSC & $ 27.87 $ & $ 27.59 $ & $ 28.50 $ & $ 23.94 $ \\
		\hline
		InSAR-BM3D & $ 32.77 $ & $ 32.68 $ & $ 33.69 $ & $ 23.16 $ \\
		\hline
		ComCSC-GR & $ \mathbf{32.79} $ & $ \mathbf{32.71} $ & $ \mathbf{33.70} $ & $ \mathbf{25.36} $ \\
		\hline
		\hline
	\end{tabular}
	
\end{table}

\begin{figure}[!t]
 	\centering
 	\includegraphics[width=0.5\textwidth]{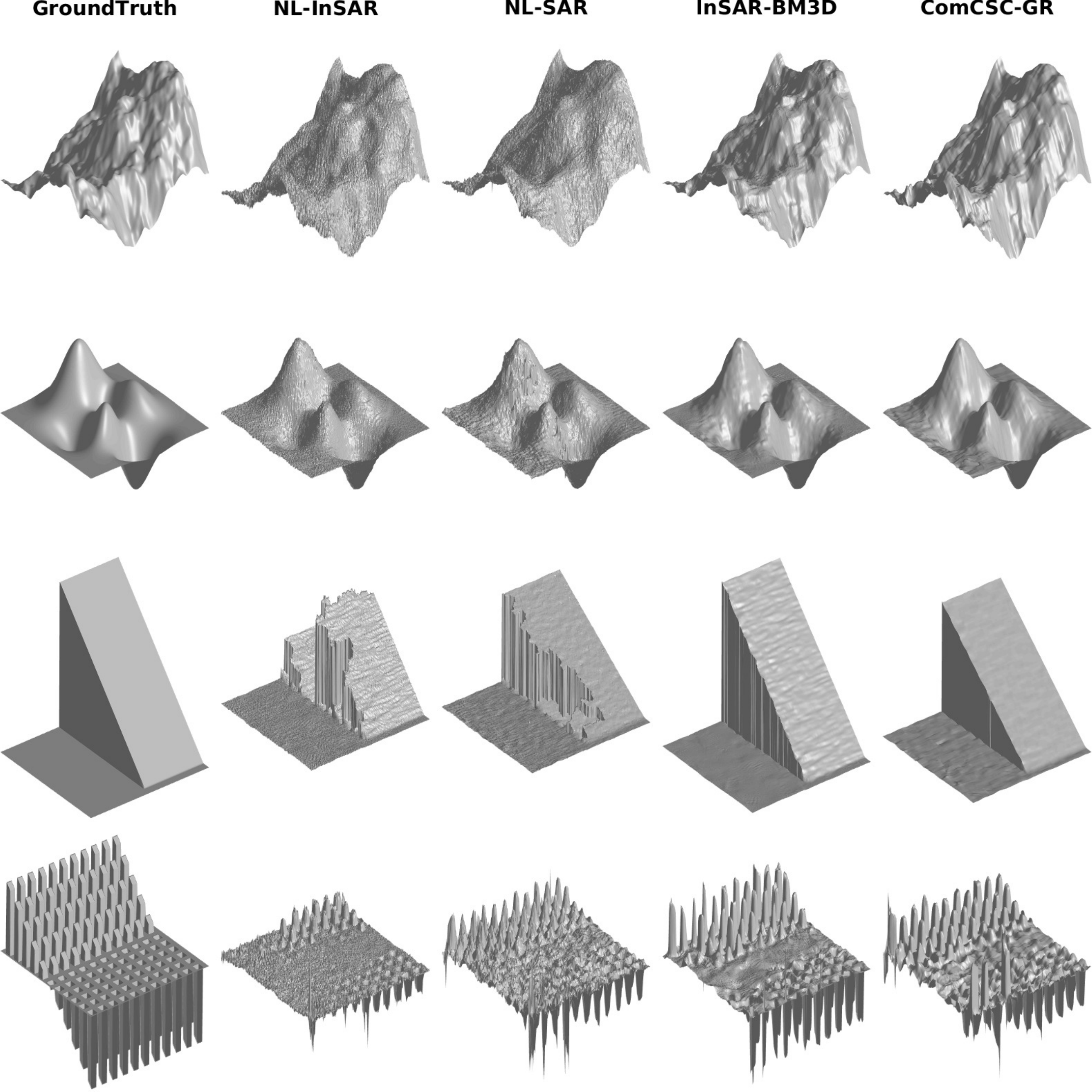}
 	\caption{Unwrapped phases of the filtered interferograms by the same phase unwrapping algorithm. }
 	\label{fg:simu_unwrapping}
\end{figure}

\begin{figure}[!t]
	\centering
	\includegraphics[width=0.24\textwidth]{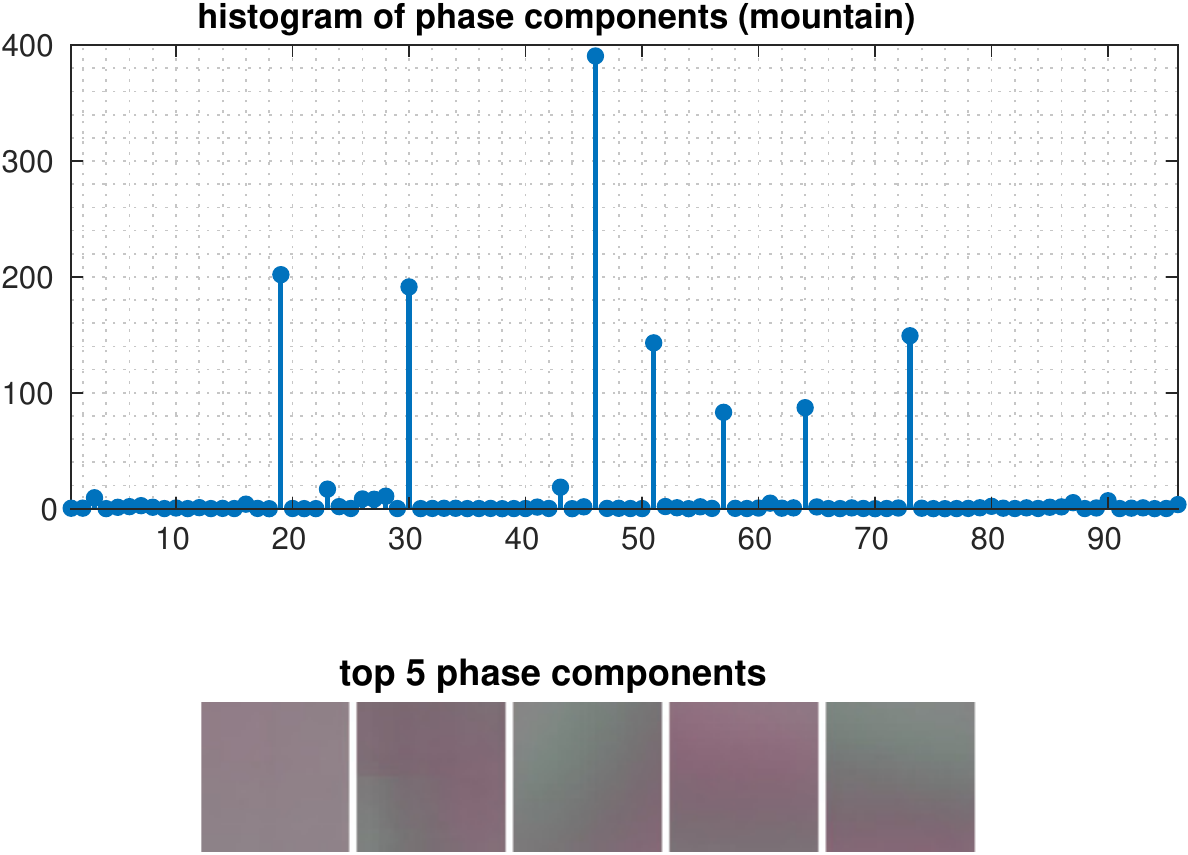}~
	\includegraphics[width=0.24\textwidth]{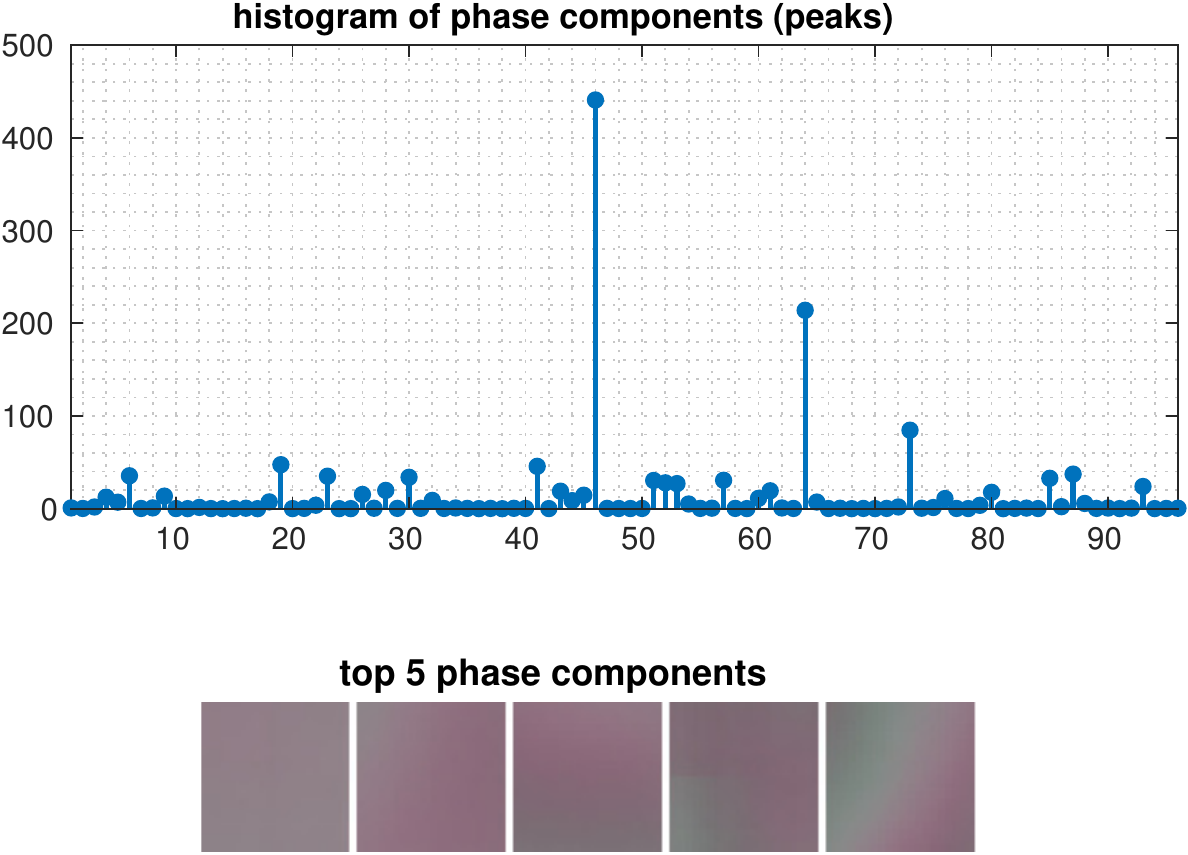}
	\hfill
	\includegraphics[width=0.24\textwidth]{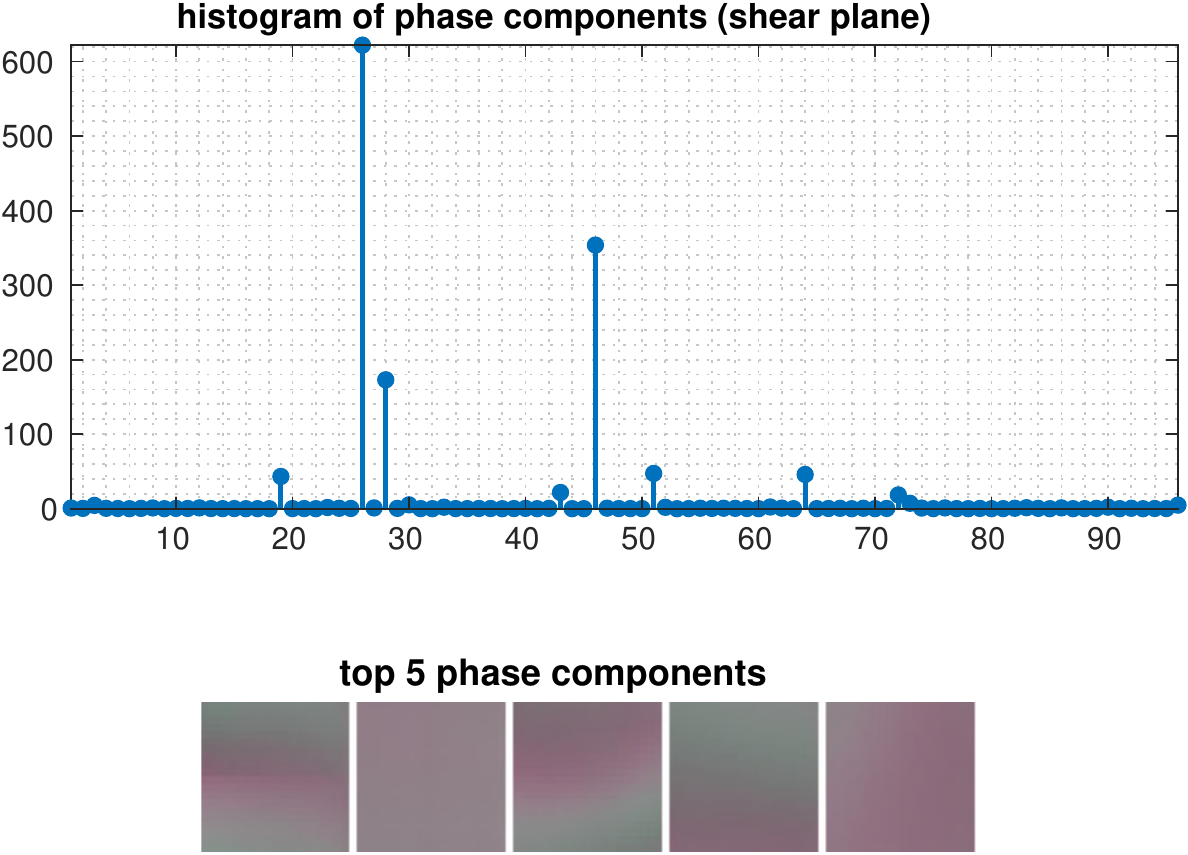}~
	\includegraphics[width=0.24\textwidth]{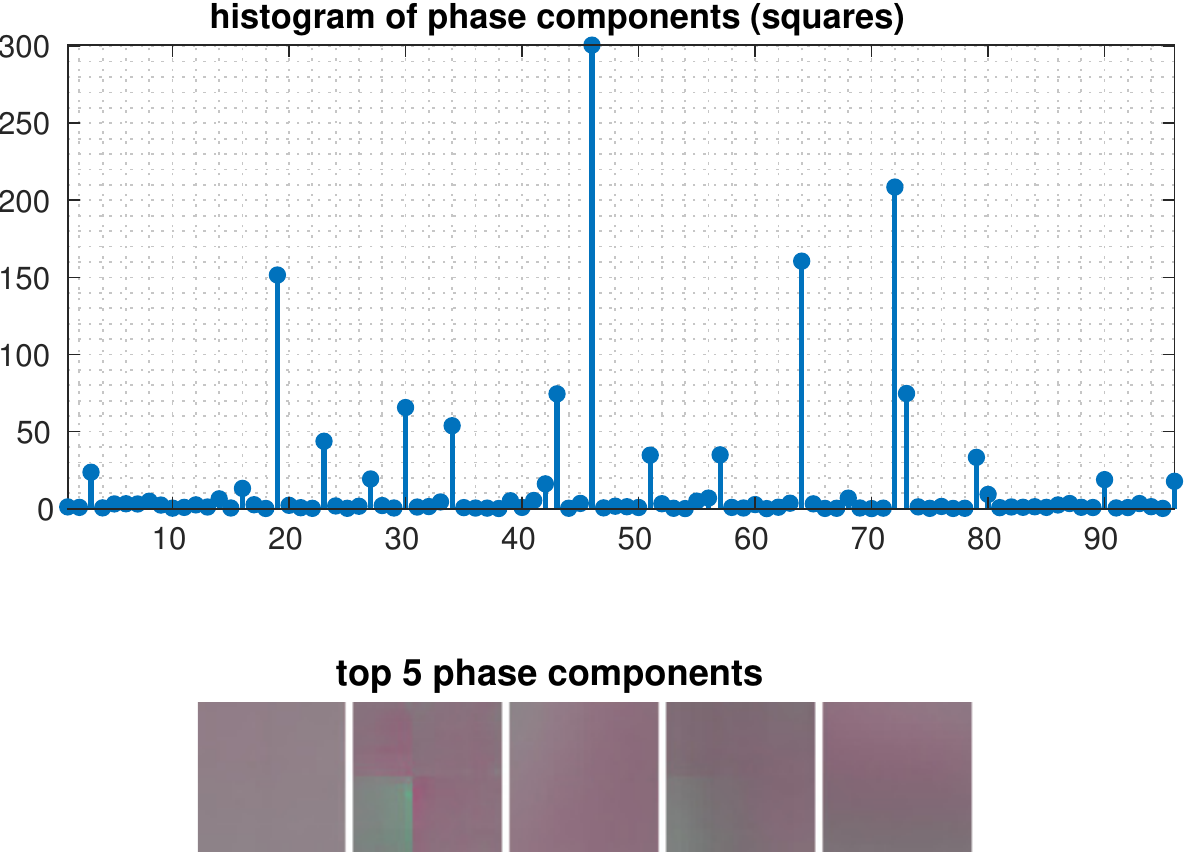}
	\caption{The elementary phase components for the study dataset, which are produced by the proposed method.}
	\label{fg:phs_comp_4cpx}
\end{figure}

One of the advantages of the proposed method than the nonlocal-based methods is that it can provide an insight into the elementary phase components for the study dataset. As demonstrated in \Fig \ref{fg:phs_comp_4cpx}, we calculate the summation of the amplitude values of the sparse coefficient maps, which present the contributions of the learned convolutional filters. Accordingly, the top five phase components are shown in \Fig \ref{fg:phs_comp_4cpx}. It can be obviously seen that different interferometric patterns have different codes of filters. For example, the most contributions of mountain and peaks are low-pass components, since the corresponding dominant filters are smooth. For the shear plane, the interferometric phase is mainly composed of both phase jumps and phase planes. As for squares, the filter with the rectangle shape becomes one of the dominant phase components.

\begin{figure}[!t]
	\centering
	\includegraphics[width=0.24\textwidth]{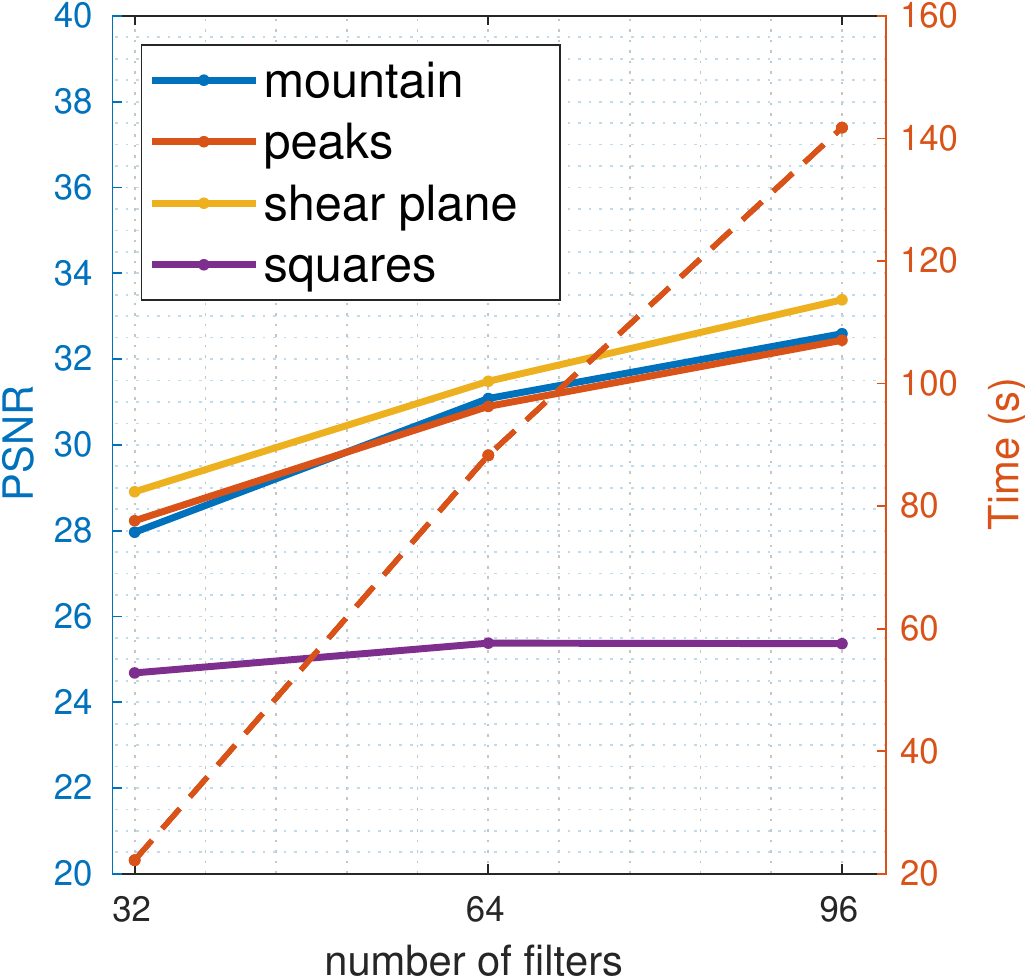}~
	\includegraphics[width=0.24\textwidth]{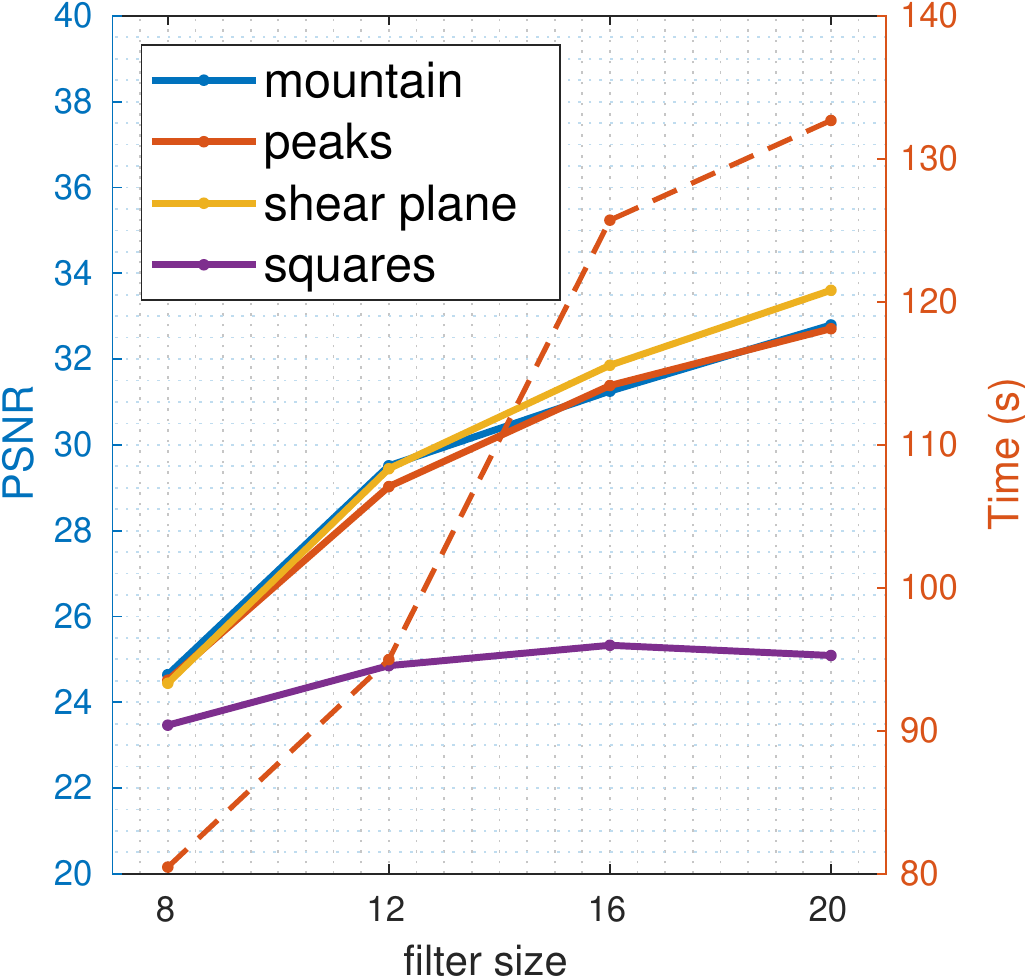}
	\caption{The efficiency study under different parameter settings, \textit{i.e.} the number of filters $ M $ and the filter size $ L $. The solid lines represent PSNR versus $M$ or $L$; the dashed lines represent time consumption versus $M$ or $L$.}
	\label{fg:effi_sty_filter_num_sz}
\end{figure}

There are mainly four parameters to be tuned in the proposed method, \textit{i.e.} $ \lambda $, $ \mu $, the number of filters $ M $  and the filter size $ L $. In our experiments, $ \lambda $ is set as $ 0.2 $ in the dictionary learning step, while $ 2.5 $ in the phase restoration step given the noisy phase input. $ \mu $ in ComCSC-GR is set in the range of $ [2,10] $ as the penalty parameter of the gradient regularization. Of course, the parameter setting is not limited to this, when a different dataset is processed. It is important to note here that the performances of phase reconstruction with respect to the other two parameters, \textit{i.e.} $ M $ and $ L $. Based on ComCSC-GR, we demonstrate the efficiency study of different parameter settings in \Fig \ref{fg:effi_sty_filter_num_sz}. It can been seen that larger number and larger size of filters can improve the reconstruction results of the interferometric phases. The plausible reason is that the representation capability can be enhanced as the number of parameters to be learned increases.  Moreover, not surprisingly, larger $M$ and $L$ lead to higher computational time. \Fig \ref{fg:effi_sty_filter_num_sz} shows that the time consumption is nearly linear to $M$, which supports our theoretical complexity bound $\mathcal{O}(TMN\log(N))$ given in Section \ref{sec:csc_complexity}. The quantitative relationship between $L$ and computational cost is not obvious. In our algorithm, each filter $\mathbf{d}_m$ of size $L$ is zero-padded to size $N$ ($L<N$) and FFT is conducted on the padded kernel. Since there are many zeros in the padded kernel, FFT can be faster than the theoretical upper bound $\mathcal{O}(N\log(N))$. This is why the computational cost is related with $L$. However, it highly depends on the implementation rather than the algorithm itself. Thus we just give a worst-case upper bound in Section \ref{sec:csc_complexity}.

\subsection{Real Data}

\begin{figure}[!t]
	\centering
	\includegraphics[width=0.23\textwidth]{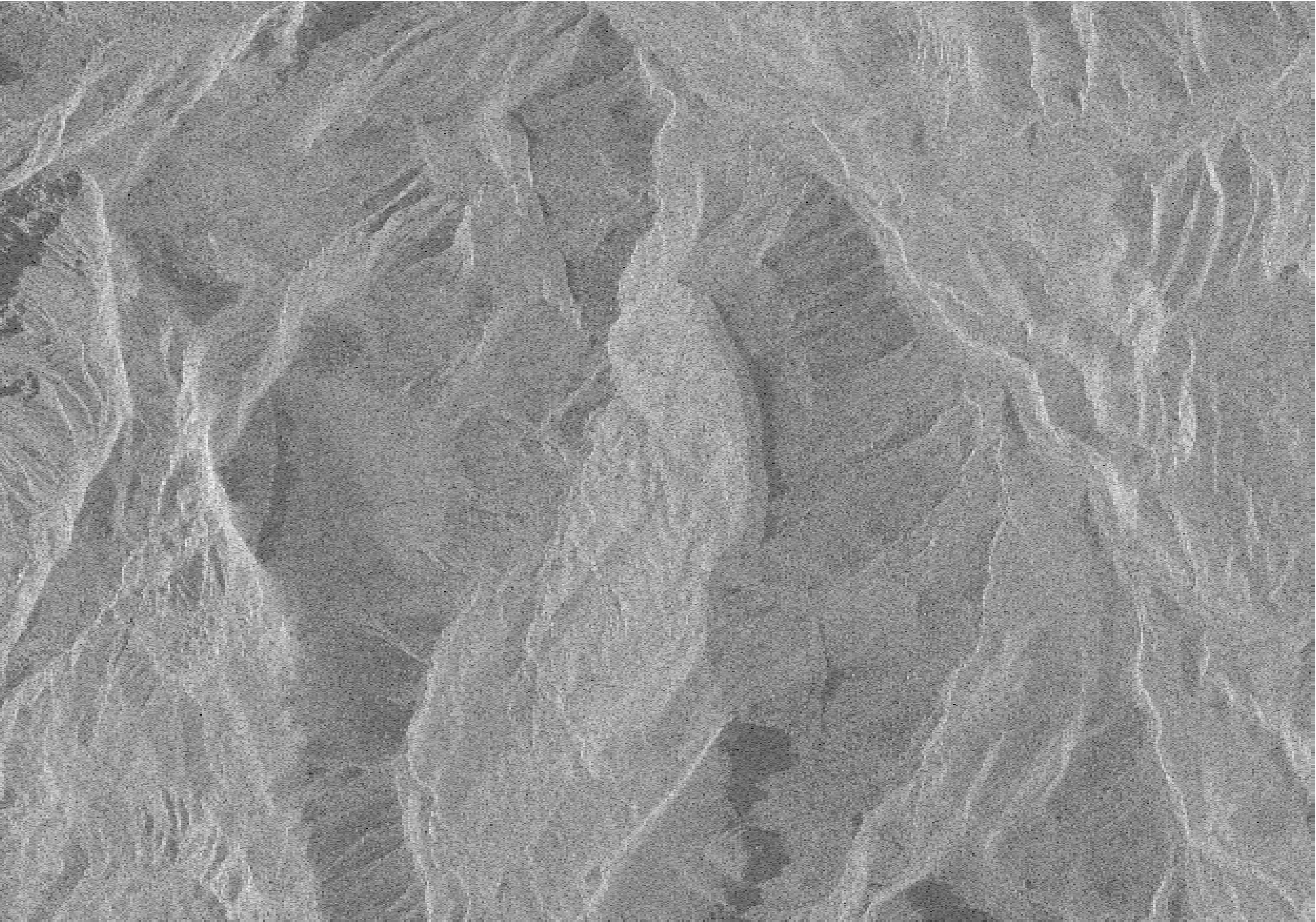}~
	\includegraphics[width=0.23\textwidth]{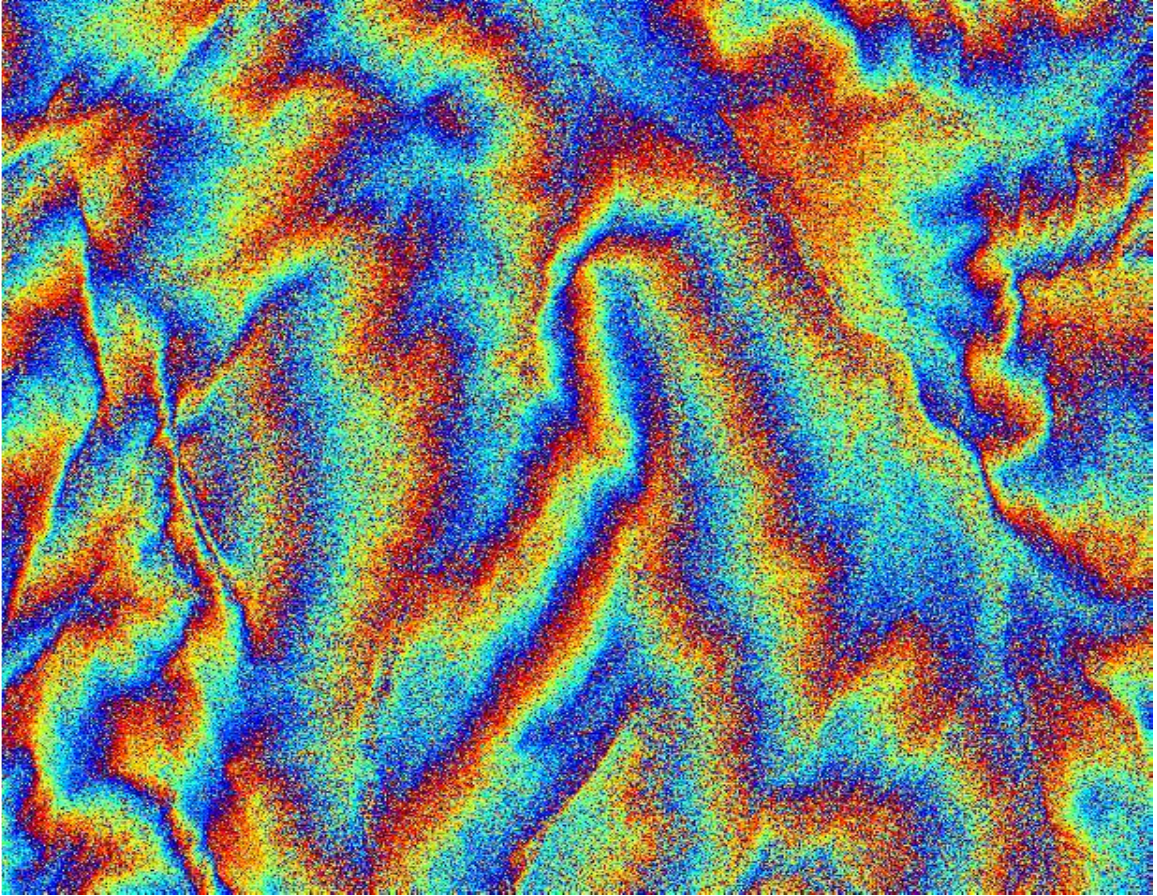}
	\caption{(Left) The mountainous study area of TerraSAR-X StripMap data shown by the amplitude (log scale). (Right) The corresponding unfiltered interferogram processed with ESA SNAP toolbox.}
	\label{fg:real_mountain_study_area}
\end{figure}

\begin{figure*}[!t]
	\centering
	\includegraphics[width=\textwidth]{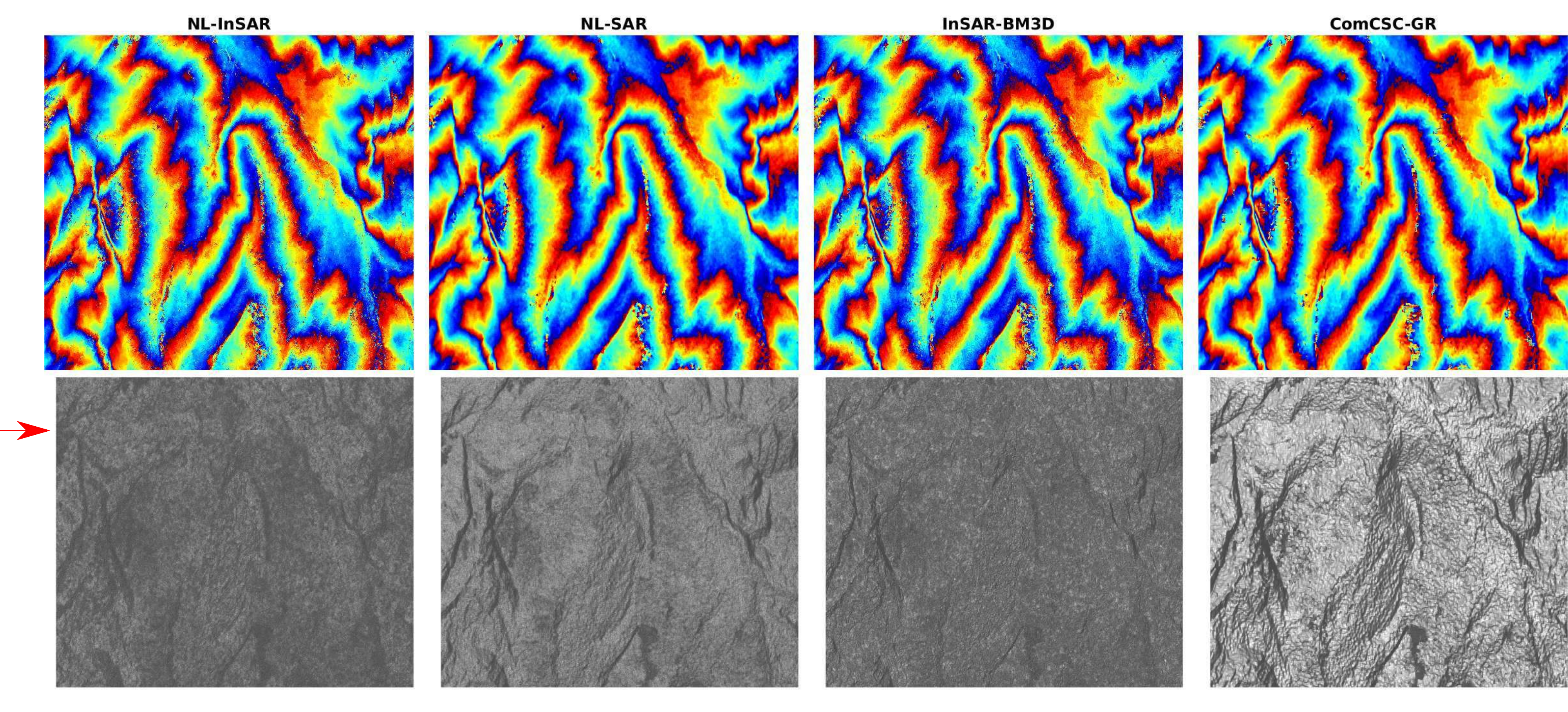}
	\caption{The filtered results of the four comparing methods (top row) and the corresponding unwrapped phases (bottom row). As notified by the red arrow, the profiles of the unwrapped phases are plotted in \Fig \ref{fg:extracted_one_profile} (Left) and one zoom-in area of the profile is illustrated to its right. }
	\label{fg:real_mountain_result}
\end{figure*}

\begin{figure}[!t]
	\centering
	\includegraphics[width=0.24\textwidth]{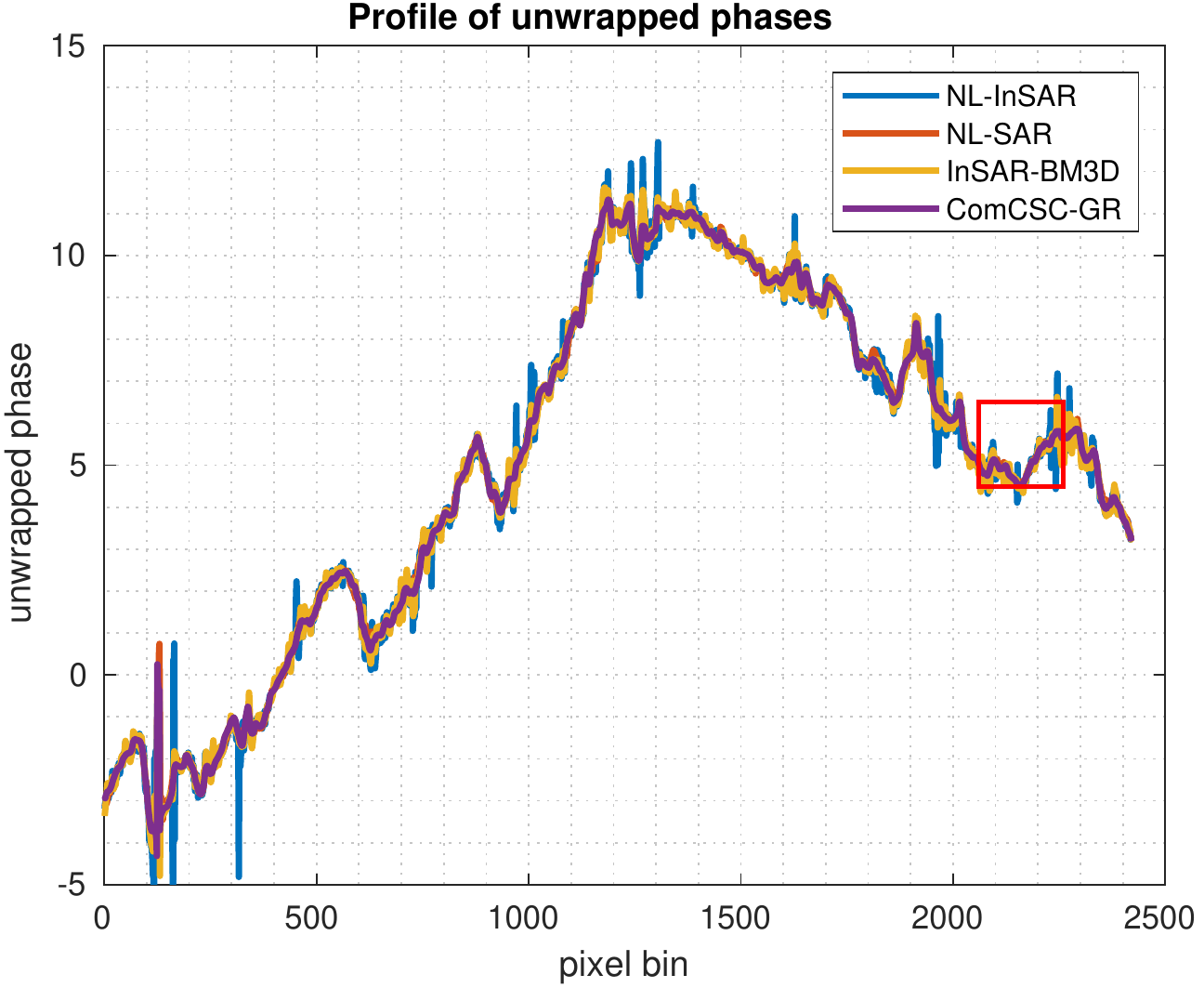}~
	\includegraphics[width=0.24\textwidth]{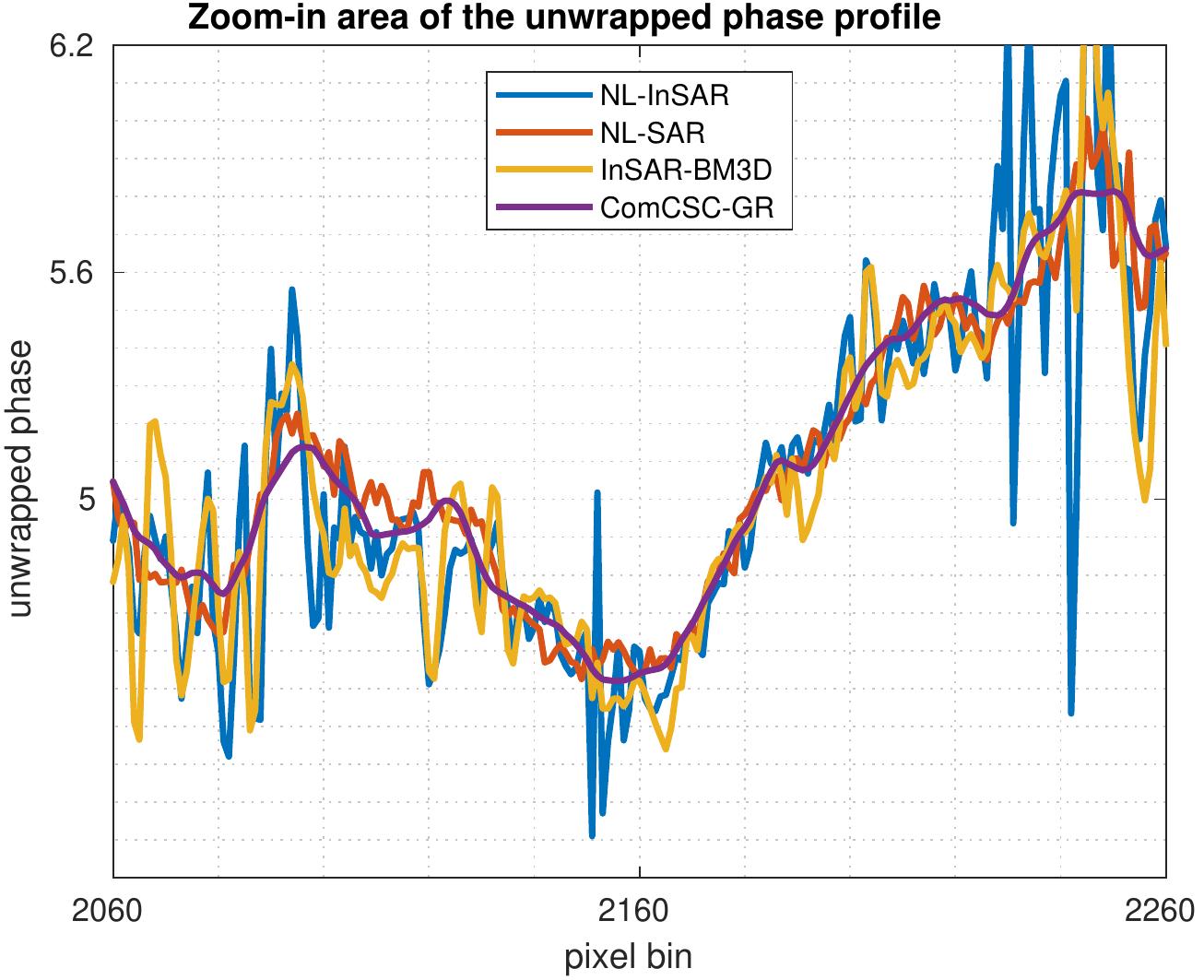}
	\caption{Profiles extracted on the unwrapped phases in \Fig \ref{fg:real_mountain_result} (indicated by the red arrow) and one zoom-in area. }
	\label{fg:extracted_one_profile}
\end{figure}

\begin{figure}[!t]
	\centering
	\includegraphics[width=0.45\textwidth]{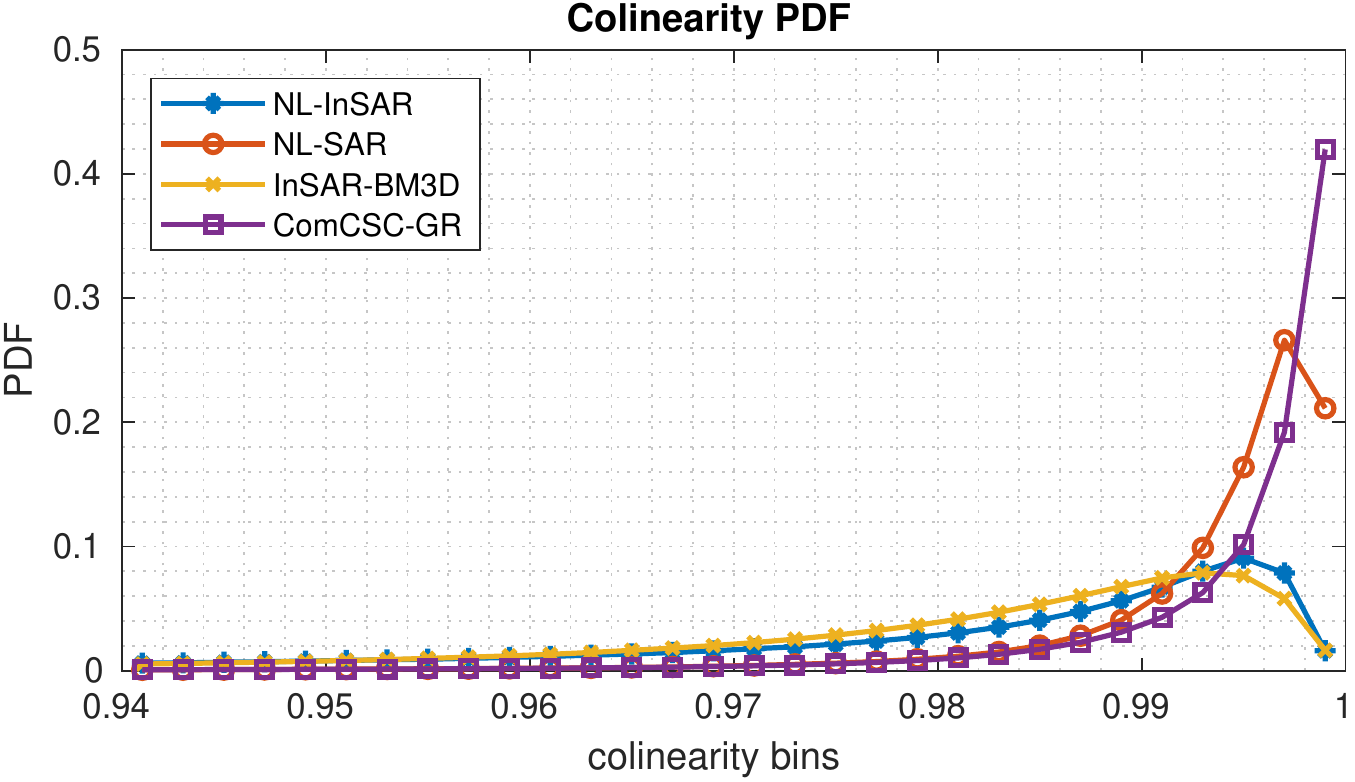}
	\caption{Colinearity probability density functions of the comparing methods, which are calculated based on the filtered interferograms.}
	\label{fg:moutain_colinearity_pdf}
\end{figure}

The first experiment is carried out on a TerraSAR-X StripMap dataset provided by AIRBUS Sample Imagery. The images are acquired on the are of Grand Canyon National Park, Arizona, USA, with the incidence angle of $ 39.2 $ and the time of $ 03/10/2008 $ and $ 03/21/2008 $. \Fig \ref{fg:real_mountain_study_area} shows the acquired mountainous area. The spatial size of this area is $ 1670\times2420 $.
The interferogram is processed with ESA SNAP toolbox and the original unfiltered interferogram is illustrated in its bottom.
Since there is no ground truth reference available, we first show the filtered results of the four comparing methods, \textit{i.e.} NL-InSAR, NL-SAR, InSAR-BM3D and ComCSC-GR, in the top row of \Fig \ref{fg:real_mountain_result} under the similar parameter setting as the simulations. By utilizing the same phase unwrapping algorithm, the bottom row of \Fig \ref{fg:real_mountain_result} demonstrates the corresponding unwrapped phases. Furthermore, the profiles of the unwrapped phases delineated by the red arrow are plotted in \Fig \ref{fg:extracted_one_profile} (Left) and one zoom-in area of the profile is illustrated to its right.

As shown in \Fig \ref{fg:real_mountain_result}, even though all the methods can greatly mitigate the noise, the result of NL-InSAR still contains noisy artifacts, compared with the other methods. Therefore, the corresponding topography revealed by the unwrapped phase is much noisier than the other methods.
NL-SAR and InSAR-BM3D achieve visually, appealing filtering performance, owing to the noise suppression and smoothness preservation.
However, as illustrated in the unwrapped phases, the topographic variations of the mountain are indicated more clearly in the proposed method than NL-SAR and InSAR-BM3D. Moreover, as observed in the extracted profiles of the unwrapped phase (\Fig \ref{fg:extracted_one_profile}), very sharp variations of the phases exist in the result of NL-InSAR and cannot correctly indicate the elevations of the mountain. Also, the staircase effect of NL-InSAR method is evidently observed in the zoom-in profile plot, especially in the decreasing and increasing slopes of the mountain. Consistent with the simulation, InSAR-BM3D cannot smoothly reconstruct the continuous variations of real phases. As illustrated in the zoom-in profile, the real phase vibration can be clearly observed and it will lead to the noisy DEM product. In comparison, the proposed method can greatly suppress the noise, maintain the details of phase fringes and avoid the staircase effect.

In order to evaluate the filtered interferograms without the high-resolution DEM ground truth of this area, we utilize the \textit{Colinearity Criterion} proposed in \cite{pinel2012multi}, which is defined as:
\begin{equation}
\begin{aligned}
C_i=\frac{|\sum_{p\in{M}_i}\exp(j(\phi_i-\phi_p))|}{M^2-1}\times\frac{\sum_{p\in{M}_i}|\exp(j(\phi_i-\phi_p))|}{M^2-1},
\end{aligned}
\end{equation}
where $ i $ is the index of the pixel to be assessed, $ p\in{M}_i $ denotes the close neighborhood pixels surrounding the $ i $th pixel, the local window size $ M $ is set as $ 7 $ (around $ 21\times21 [m^2] $ area), and $ \phi $ represents the real interferometric phase. This criterion measures the similarity of the phase history of the study pixel with respect to its surrounding ones. It is a measurement of homogeneity or smoothness given the neighboring pixels of interferometric phases. Higher values indicate the better homogeneity of the filtered InSAR phases. To some extent, it can be regarded as a quality assessment for the filtered interferogram, especially for the areas with homogeneous geophysical parameters. For this mountainous area, the main interferometric phase contribution is from the topography of this area. Within the area of $ 21\times21 [m^2] $, the colinearity should achieve high values, since the elevations of this small area are homogeneous. As displayed in \Fig \ref{fg:moutain_colinearity_pdf}, most points can achieve a colinearity above $ 0.9 $ in all the methods. However, the best homogeneity of
the filtered interferogram can be obtained by the proposed method, which can indicate that the underlying continuous variation of the elevation in this area can be smoothly reconstructed by ComCSC-GR.

\begin{figure}[!t]
	\centering
	\includegraphics[width=0.23\textwidth]{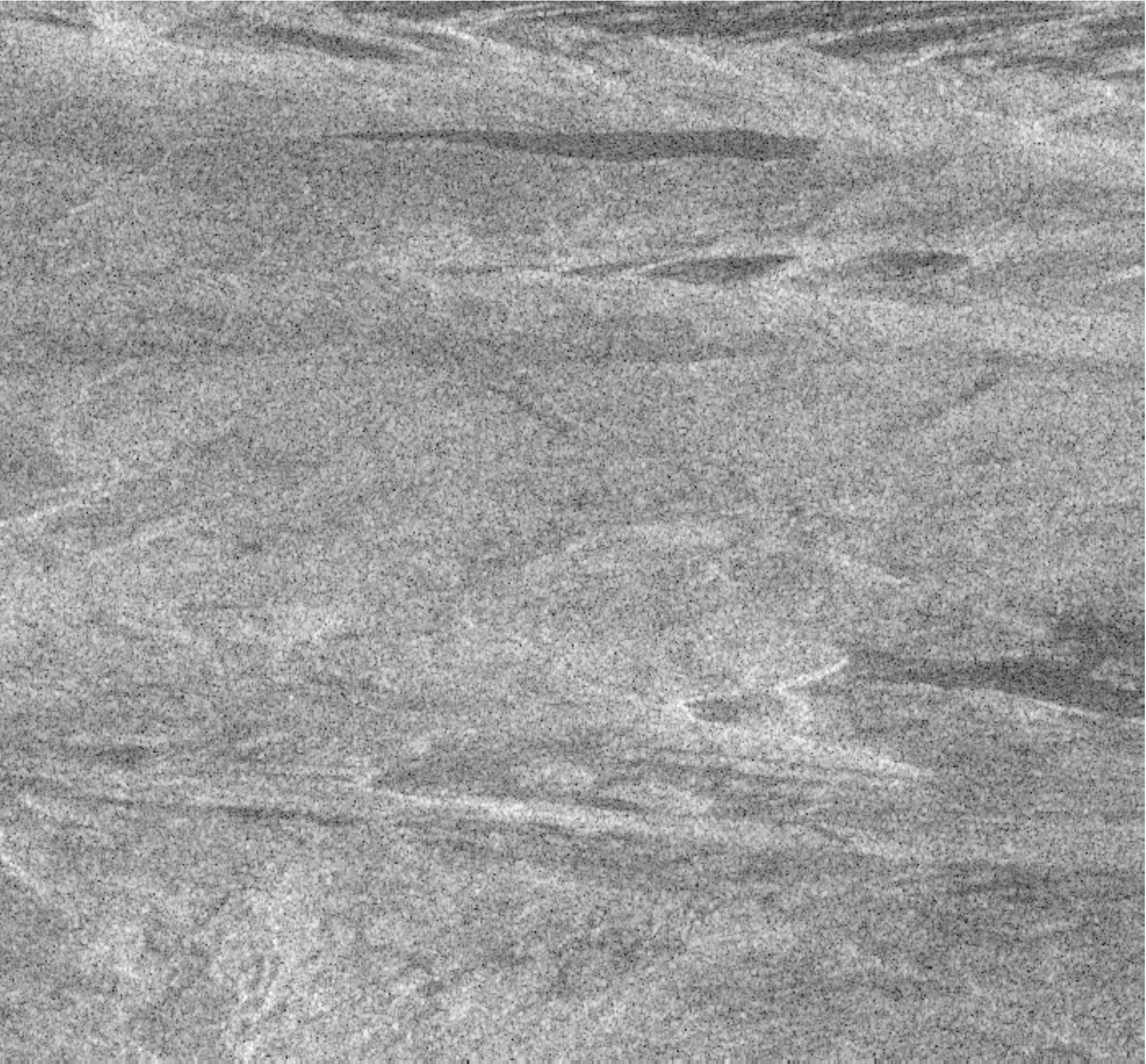}~
	\includegraphics[width=0.23\textwidth]{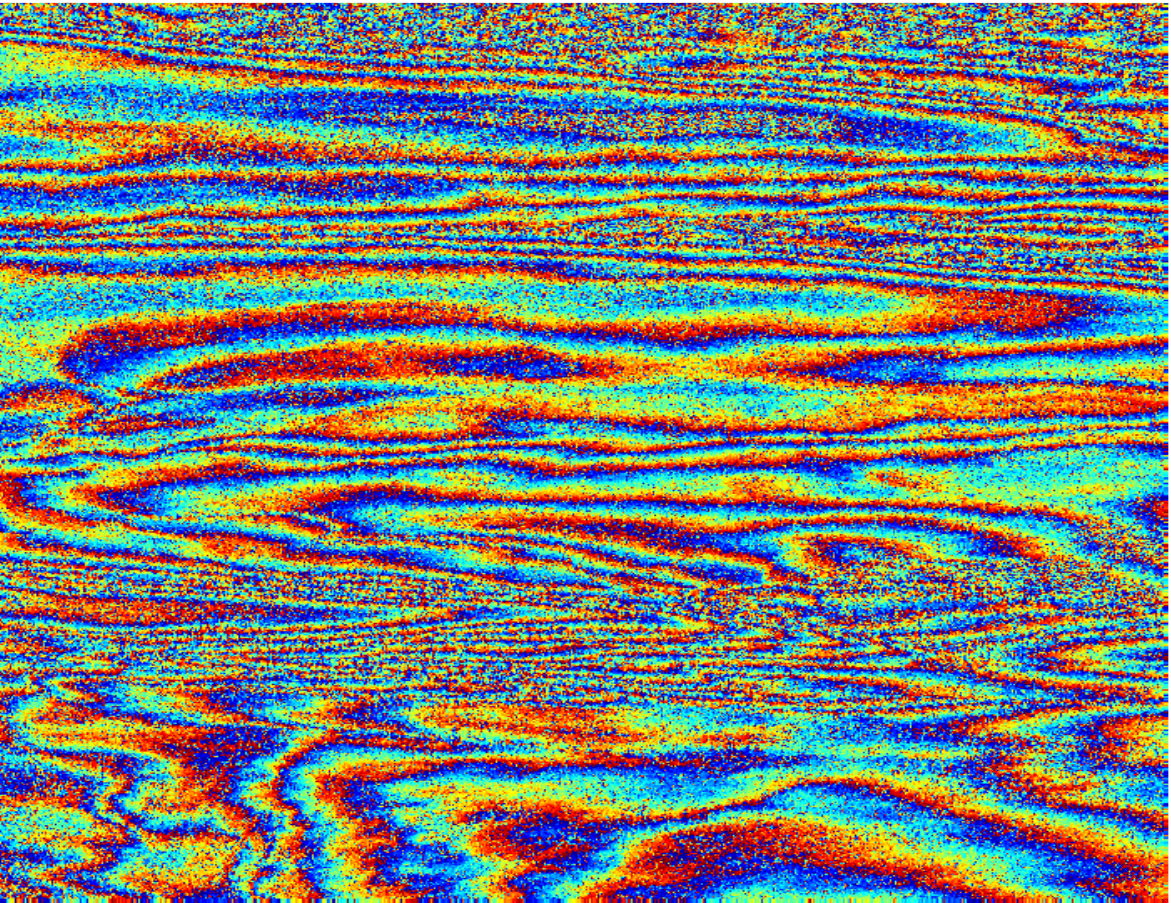}
	\caption{(Left) The mountainous study area of Sentinel-1 IW data shown by the amplitude (log scale). (Right) The corresponding unfiltered interferogram processed with ESA SNAP toolbox.}
	\label{fg:real_Sen1_mountain_study_area}
\end{figure}

\begin{figure*}[!t]
	\centering
	\includegraphics[width=\textwidth]{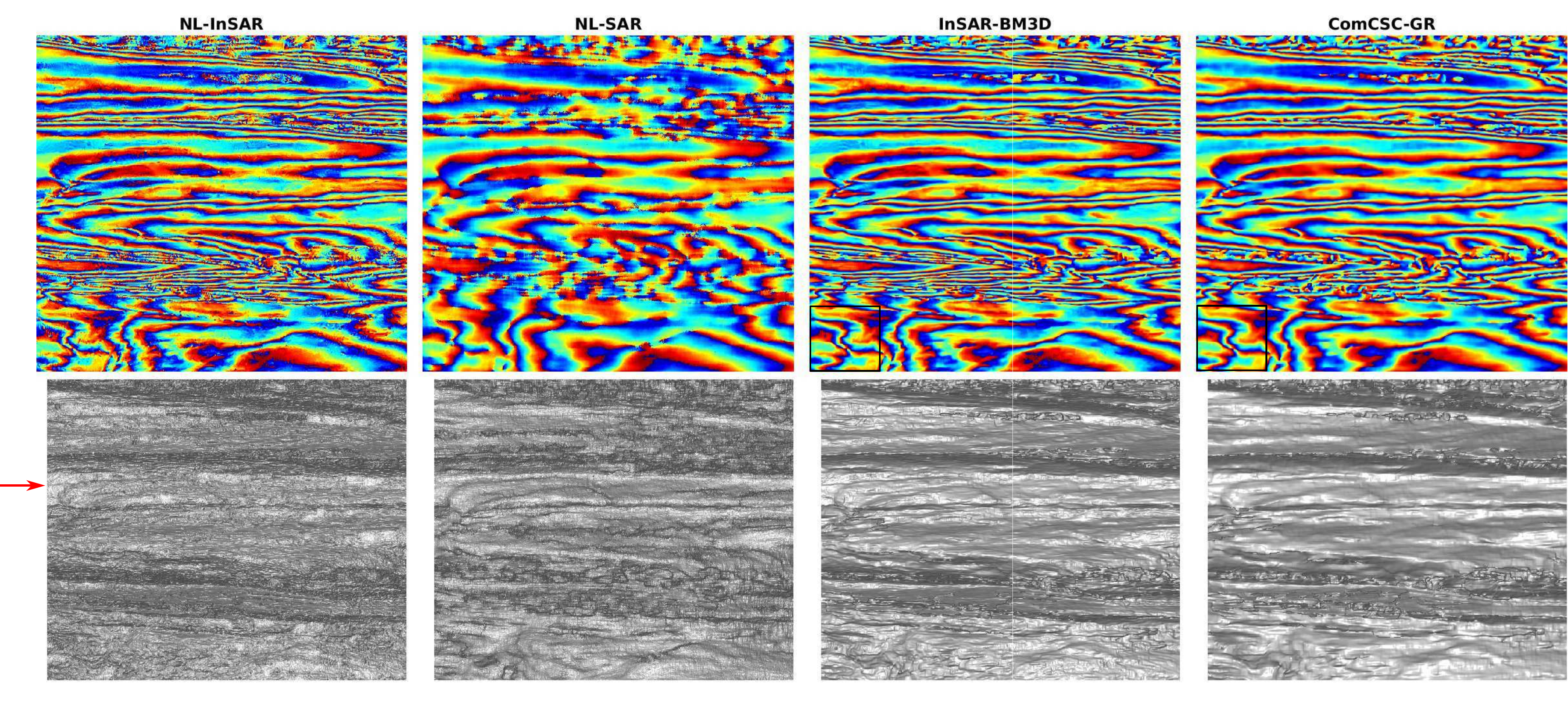}
	\caption{The filtered results of the four comparing methods (top row) and the corresponding unwrapped phases (bottom row). As notified by the red arrow, the profiles of the unwrapped phases are plotted in \Fig \ref{fig:real_Sen1_phase_profile} and one zoom-in area indicated by the black box at the bottom-left corner of InSAR-BM3D and ComCSC-GR results is illustrated in \Fig \ref{fig:zoomin_Sen1_area}. }
	\label{fg:real_Sen1_mountain_result}
\end{figure*}

\begin{figure}[!t]
    \centering
    \includegraphics[width=0.45\textwidth]{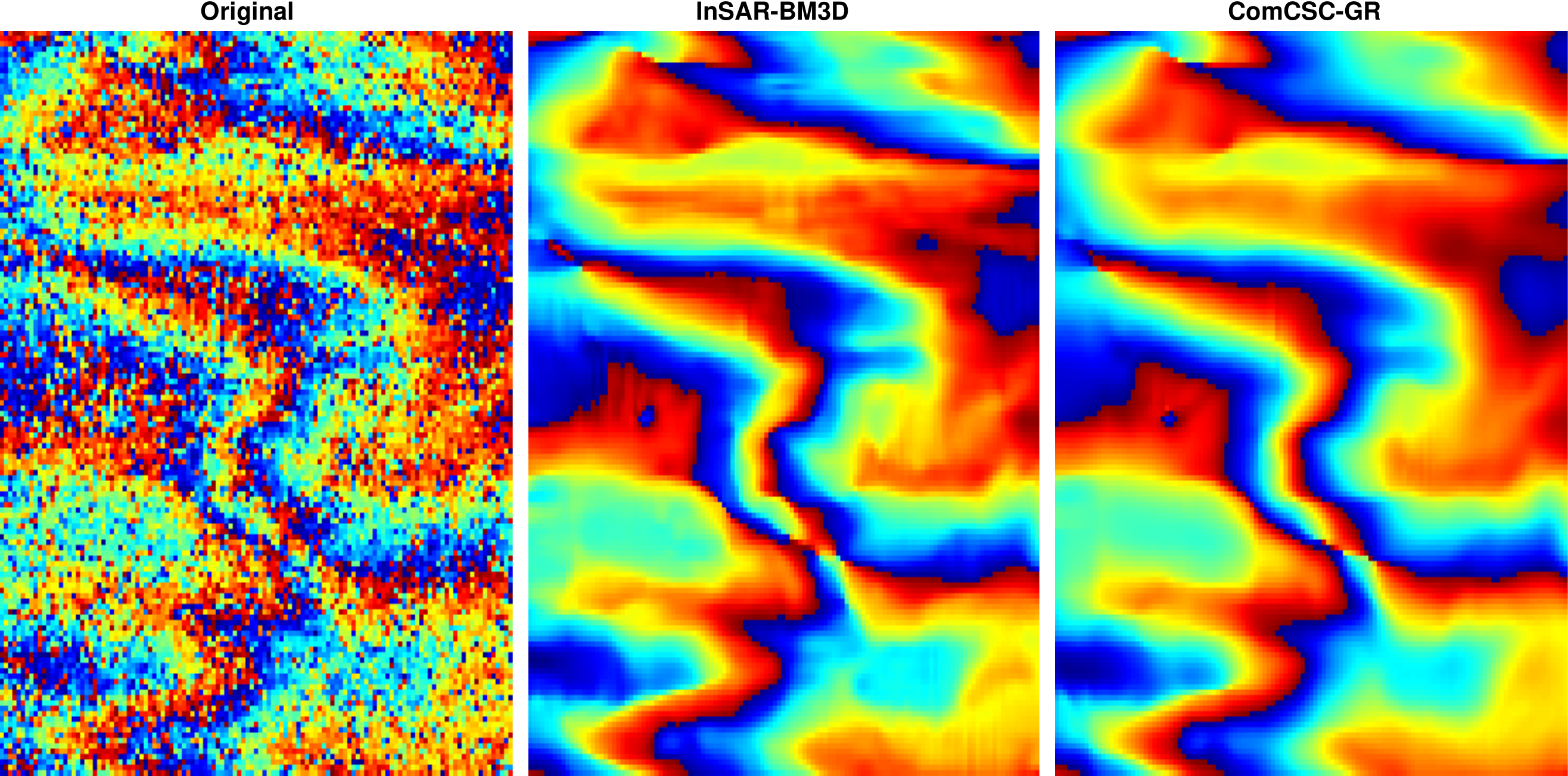}
    \caption{The comparison between the zoom-in area of the results from InSAR-BM3D and the proposed method (the black box area indicated in \Fig \ref{fg:real_Sen1_mountain_result}).}
    \label{fig:zoomin_Sen1_area}
\end{figure}

\begin{figure}[!t]
    \centering
    \includegraphics[width=0.45\textwidth]{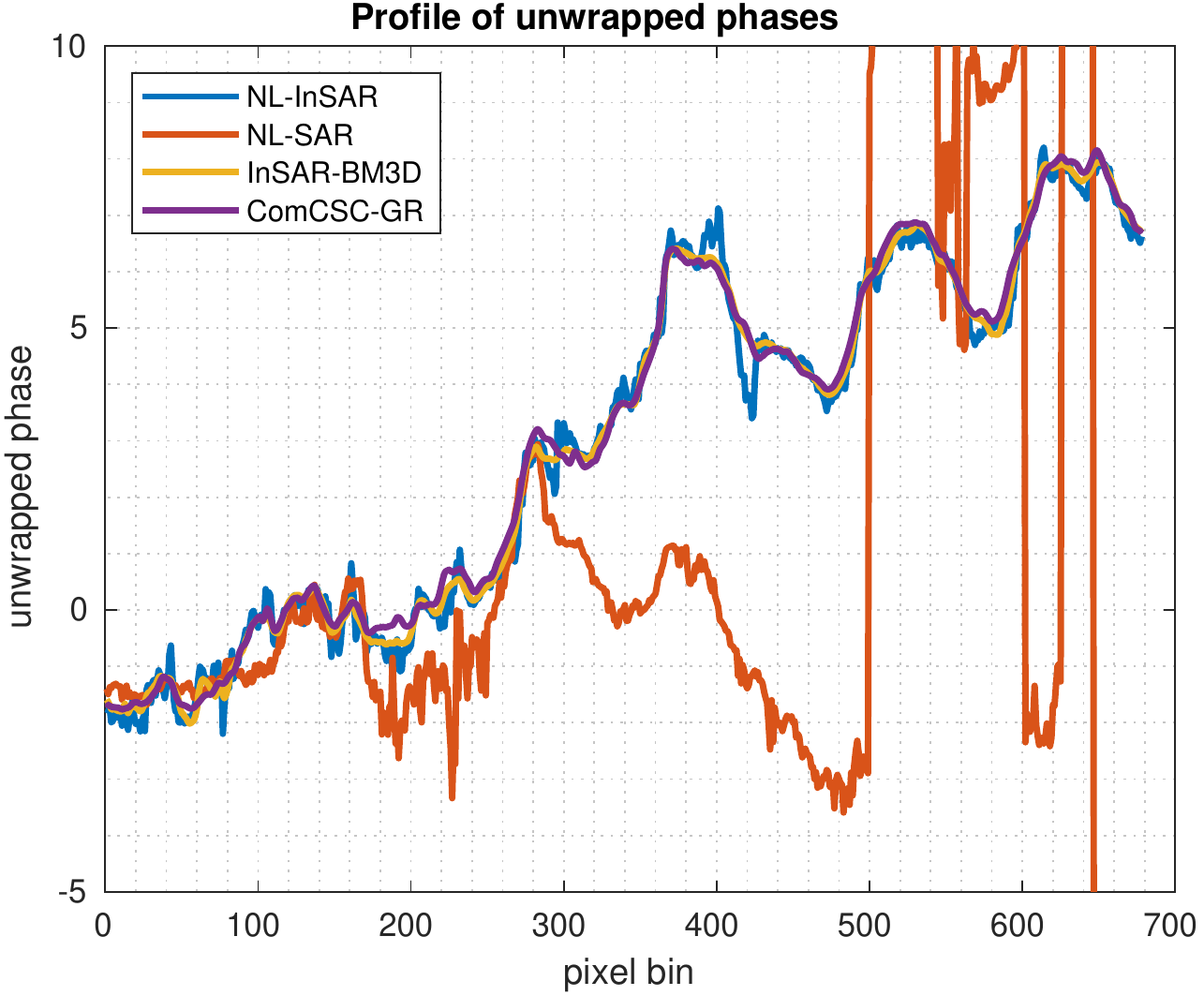}
    \caption{Profiles extracted in the results of unwrapped phases \Fig \ref{fg:real_Sen1_mountain_result} (indicated by the red arrow). }
    \label{fig:real_Sen1_phase_profile}
\end{figure}

\begin{figure}[!t]
    \centering
    \includegraphics[width=0.45\textwidth]{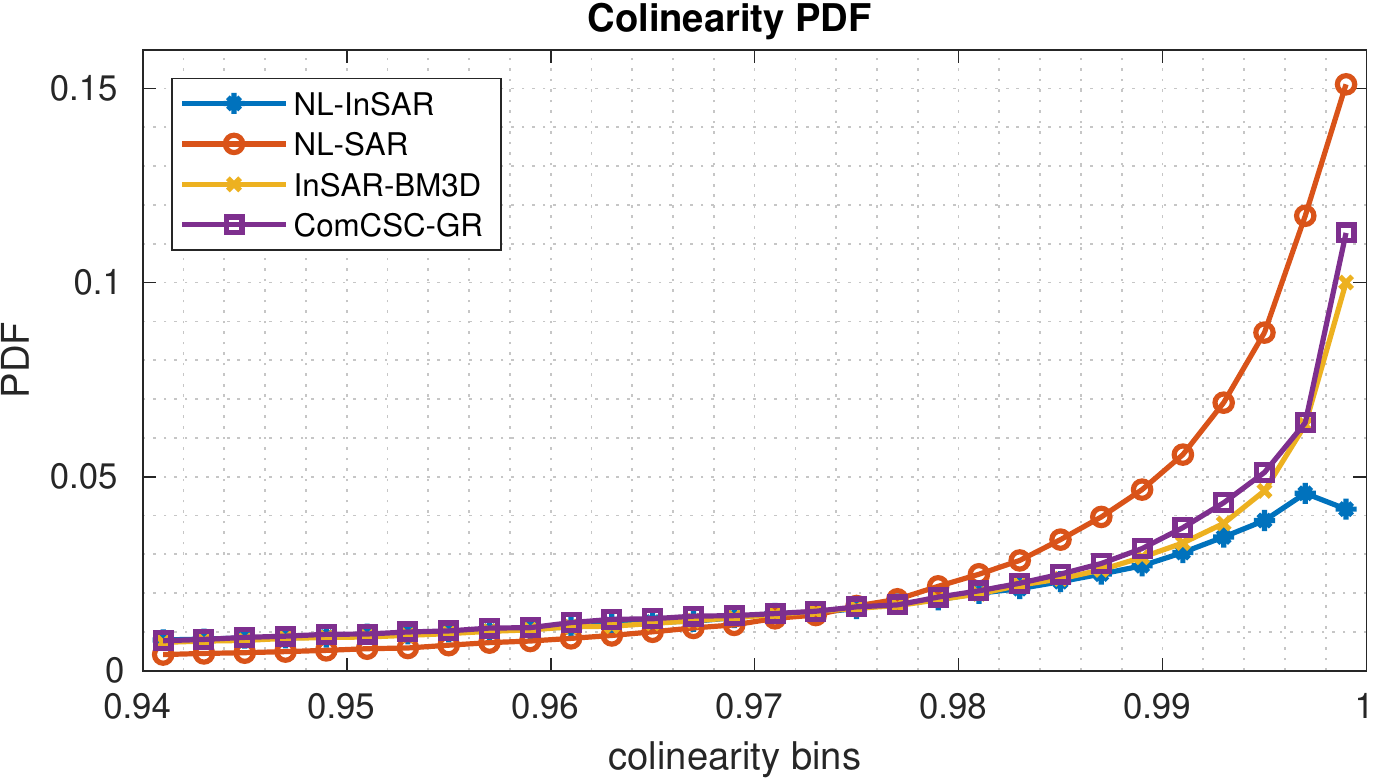}
    \caption{Colinearity probability density functions of the comparing methods, which are calculated based on the filtered interferograms.}
    \label{fig:Sen1_colinearity}
\end{figure}

The second experiment is conducted on another dataset of Alps mountains acquired by Sentinel-1 with the Interferometric Wide (IW) swath mode. The acquisition dates of the two images are $ 09/10/2018 $ and $ 15/10/2018 $, respectively. \Fig \ref{fg:real_Sen1_mountain_study_area} displays the amplitude SAR image and its original interferogram. The spatial size is of $ 650\times700 $ pixels. Compared with the first area, this interferogram is much denser and more heterogeneous. Similar with the above experiment, we filter the interferogram based on the four methods, \textit{i.e.} NL-InSAR, NL-SAR, InSAR-BM3D and ComCSC-GR and demonstrate the filtered results in \Fig \ref{fg:real_Sen1_mountain_result} (Top). At its bottom, the unwrapped phases are also calculated. One unwrapped phase profile is extracted to further compare the visualized performance among all the methods \Fig \ref{fig:real_Sen1_phase_profile}. Moreover, one zoom-in area at the bottom-left of the filtered results based on InSAR-BM3D and the proposed method is cropped and displayed in \Fig \ref{fig:zoomin_Sen1_area}.

As shown in \Fig \ref{fg:real_Sen1_mountain_result}, for this dense fringe area, NL-SAR result is over-smoothed, especially in the heterogeneous areas. For example, at the bottom part of the result, the fringes filtered by NL-SAR are blurred and cannot reflect the correct topography. Additionally, due to the over-smoothing issue, the unwrapped phase of the extracted profile is also incorrect (\Fig \ref{fig:real_Sen1_phase_profile}). As a comparison, based on the learned high-frequency convolutional kernels, the proposed method can very well preserve such heterogeneous characteristic of the dataset and mitigate the noise simultaneously. Although NL-InSAR can also preserve the phase details, the staircase effect in the restoration seriously influences the quality of the result, as illustrated by the unwrapped phases in \Fig \ref{fg:real_Sen1_mountain_result} and the profiles \Fig \ref{fig:real_Sen1_phase_profile}. Both InSAR-BM3D and ComCSC-GR can very well restore the underlying phases for this area with the dense fringe preservation and the noise mitigation. However, as the results shown in the zoom-in area \Fig \ref{fig:zoomin_Sen1_area}, InSAR-BM3D cannot efficiently reconstruct the phases near the fringe edges. It can be clearly observed that some vertical or horizontal artifacts exist in the filtered interferograms. In a contrast, not only the details of the fringe edges can be protected, but also the regions nearby can be more consistently reconstructed. As demonstrated the colinearity assessment in \Fig \ref{fig:Sen1_colinearity}, due to the over-smoothing effect of NL-SAR, it achieves the highest scores of the colinearity. However, NL-SAR loses the structural information of the topography in this area. Slightly better colinearity performance can be reached by the proposed method than InSAR-BM3D. Due to the staircase effect of NL-InSAR, the similarities among the pixels of the neighborhood cannot be high.

\section{Conclusion}\label{sc:conclustion}
In this paper, the convolutional sparse coding algorithm and its gradient regularized version are proposed in the complex domain. They can be exploited for interferometric phase restoration, which can avoid the staircase effect and preserve the details of phase variations. Moreover, ComCSC can decompose the global image in a deconvolutional manner, which can provide an insight for the elementary phase components for the interferometric phases. The corresponding performance is validated on both the synthetic and realistic high- and medium-resolution datasets from  TerraSAR-X StripMap and Sentinel-1 IW mode, respectively, with the comparison of the other state-of-the-art methods.

As indicated by the experiments, more accurate phase restoration can be achieved as the filter number and filter size increase. However, considering the computational costs, those two parameters cannot be set too big, since more parameters should be learned with the support of more training data. In our experiments, appealing results of both simulations and real datasets can be obtained with the $ 96 $ filters and $ 20\times20 $ filter size. Moreover, theoretically, the proposed method can be directly operated on the whole interferogram rather than patch-wisely as the conventional dictionary learning. In practice, when the spatial dimension of the studied dataset is too large for processing, the algorithm can also be carried out in a sliding-window manner.

As a future work, we plan to investigate the extension of the algorithm on InSAR stacks for simultaneously denoising multi-temporal interferograms. Also, in order to improve the performance of the sparse coding step, learning convolutional filters with multiple sizes can be further researched.



%

\appendices

\section{Discussion of (\ref{eq:convsparsecoding_cpx_softthres}) }
\label{app:proof}
In this section, we want to prove that the $\mathbf{Y}$ given by (\ref{eq:convsparsecoding_cpx_softthres}) is a solution of problem (\ref{eq:convsparsecodeing_Y_subproblem}).

Before the main proof, we start with a simplier case.
\begin{lemma}
\label{lemma:l1_sln}
The minimizer of the following problem
 \begin{equation}
\label{eq:l1_general}
    \min_{y}
	 \gamma  | y|+\frac{1}{2}|y-z|^2,
\end{equation}
where $y,z \in \mathbb{C}$ are both complex numbers, is
\begin{equation}
    \label{eq:l1_sln}
    y = \frac{z}{|z|} \cdot \max(0,|z|-\gamma).
\end{equation}
\end{lemma}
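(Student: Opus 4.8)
The plan is to reduce the two-dimensional (complex) problem \eqref{eq:l1_general} to a one-dimensional problem in the modulus of $y$, exploiting the fact that both terms of the objective depend on $y$ only through $|y|$ and through the inner product $\mathrm{Re}(\bar y z)$. First I would dispose of the trivial case $z=0$: there the objective equals $\gamma|y|+\tfrac12|y|^2$, which is minimized at $y=0$, in agreement with \eqref{eq:l1_sln} under the convention that the right-hand side is $0$ when $z=0$.

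Assuming $z\neq 0$, I would expand $|y-z|^2=|y|^2-2\,\mathrm{Re}(\bar y z)+|z|^2$, so that the objective of \eqref{eq:l1_general} becomes $\gamma|y|+\tfrac12|y|^2-\mathrm{Re}(\bar y z)+\tfrac12|z|^2$. For a fixed modulus $r:=|y|\ge 0$, the only term depending on the argument of $y$ is $-\mathrm{Re}(\bar y z)$, and by Cauchy--Schwarz $\mathrm{Re}(\bar y z)\le |y|\,|z|=r|z|$, with equality precisely when $y=r\,z/|z|$. Hence at any optimum the argument of $y$ coincides with that of $z$, and substituting $\mathrm{Re}(\bar y z)=r|z|$ reduces \eqref{eq:l1_general} to the scalar problem
\[
\min_{r\ge 0}\ \gamma r+\tfrac12\big(r-|z|\big)^2 .
\]

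The last step is an elementary constrained minimization of a strictly convex quadratic: its unconstrained minimizer is $r^\star=|z|-\gamma$, so the constrained minimizer is $r=\max(0,|z|-\gamma)$ — when $|z|\le\gamma$ the derivative $r-|z|+\gamma$ is nonnegative on $[0,\infty)$, forcing $r=0$. Combining with the phase-alignment identity gives $y=\dfrac{z}{|z|}\max(0,|z|-\gamma)$, which is exactly \eqref{eq:l1_sln}; uniqueness of the minimizer follows from strict convexity of the reduced objective in $r$ together with the uniqueness of the phase-maximizing choice whenever $r>0$.

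I do not anticipate a serious obstacle; the only points needing care are the bookkeeping of the degenerate case $z=0$ (and how the formula \eqref{eq:l1_sln} is to be read there) and the justification that the minimization over the argument and over the modulus of $y$ decouples — i.e.\ that the phase-aligned choice $y=r\,z/|z|$ is simultaneously optimal for \emph{every} admissible $r$, which is precisely what legitimizes the passage to the scalar problem. Once Lemma~\ref{lemma:l1_sln} is established, I expect the main claim \eqref{eq:convsparsecoding_cpx_softthres} to follow by applying the lemma entrywise, since both $\|\cdot\|_{1,1}$ and $\|\cdot\|_F^2$ separate over the entries of the matrix, so that \eqref{eq:convsparsecodeing_Y_subproblem} decomposes into independent scalar problems of the form \eqref{eq:l1_general} with $z$ the corresponding entry of $\mathbf{X}+\mathbf{U}$ and $\gamma=\lambda/\rho$.
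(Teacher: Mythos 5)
Your proof is correct and follows essentially the same route as the paper: fix the modulus of $y$, show the optimal phase aligns with that of $z$ (you via $\mathrm{Re}(\bar y z)\le |y|\,|z|$, the paper via the polar-form identity with $\cos(\theta-\theta_0)\le 1$), and then reduce to the real scalar soft-thresholding problem $\min_{r\ge 0}\gamma r+\tfrac12(r-|z|)^2$. Your explicit treatment of the degenerate case $z=0$ and the concluding entrywise-separability remark match (and slightly tighten) what the paper does in its appendix.
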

\begin{proof}
 First we use the polar form to represent complex numbers $y,z$: \[y  = \rho e^{i \theta } , z = \rho_0 e^{i \theta_0 },~~\rho,\rho_0 \geq 0,~~\theta,\theta_0 \in \mathbb{R}, \] where $\rho,\rho_0$ are the amplitude values and $ \theta,\theta_0$ are the phases. Then (\ref{eq:l1_general}) can be written as
\[
\begin{aligned}
    & \min_{y} \gamma| y|+\frac{1}{2}|y-z|^2 \\ =& \min_{\rho,\theta} \gamma  \rho + \frac{1}{2} \Big |\rho e^{i \theta } - \rho_0 e^{i \theta_0 }\Big |^2\\
    =& \min_{\rho\geq 0,\theta}\gamma \rho + \frac{1}{2} \Big (\rho e^{-i \theta } - \rho_0 e^{-i \theta_0 }\Big )\Big (\rho e^{i \theta } - \rho_0 e^{i \theta_0 }\Big )\\
    =& \min_{\rho\geq 0,\theta}\gamma \rho + \frac{1}{2} \Big (\rho^2 + \rho_0^2 -2\rho \rho_0 \cos{(\theta-\theta_0)}\Big )\\
    =& \min_{\rho\geq 0}\gamma \rho + \frac{1}{2}(\rho-\rho_0)^2 ~~(\text{real number soft-thresholding})
\end{aligned}
\]
The above equations show that the optimal $\theta = \theta_0$. Furthermore, since $\argmin_{\rho \geq 0} \gamma\rho + (\rho-\rho_0)^2/2 = \max(0,\rho_0-\gamma )$, we have the solution of (\ref{eq:l1_general}):
\[y = \max(0,\rho_0-\gamma ) e^{i\theta_0}\]In another word, $y = \frac{z}{|z|} \cdot \max(0,|z|-\gamma)$, (\ref{eq:l1_sln}) is proved.
\end{proof}

Given Lemma \ref{lemma:l1_sln} in hand, we can solve (\ref{eq:convsparsecodeing_Y_subproblem}) now.

Recall problem (\ref{eq:convsparsecodeing_Y_subproblem}):
\[
	\min_\mathbf{Y}\lambda\|\mathbf{Y}\|_{1,1}+\frac{\rho}{2}\|\mathbf{X}-\mathbf{Y}+\mathbf{U}\|_F^2.
\]
By definition (\ref{eq:concate}), tensors  $\mathbf{X,Y,U}$ are all obtained by concatenating several vectors. Based on the definition of norms $\|\cdot \|_{1,1}$ and $\|\cdot \|_F$, problem  (\ref{eq:convsparsecodeing_Y_subproblem}) is equivalent with
\[
	\min_{\{\mathbf{y}_{m,k}\}}\sum_{m=1}^M\sum_{k=1}^K \bigg( \lambda\|\mathbf{y}_{m,k}\|_{1}+\frac{\rho}{2}\|\mathbf{x}_{m,k}-\mathbf{y}_{m,k}+\mathbf{u}_{m,k}\|_2^2 \bigg),
\]
where $\mathbf{x}_{m,k},\mathbf{y}_{m,k},\mathbf{u}_{m,k}$ are $N$ dimensional complex-valued vectors. By the definition of norms $\|\cdot\|_1$ and $\|\cdot\|_2$, the above problem can be further decomposed as
\[
	\min_{\{y_{m,k,i}\}} \sum_{m=1}^M\sum_{k=1}^K \sum_{i=1}^N \bigg(
	 \lambda| y_{m,k,i}|+\frac{\rho}{2}|x_{m,k,i}-y_{m,k,i}+u_{m,k,i}|^2 \bigg),
\]where $x_{m,k,i},y_{m,k,i},u_{m,k,i}$ are the $i$-th element of vectors $\mathbf{x}_{m,k},\mathbf{y}_{m,k},\mathbf{u}_{m,k}$ respectively.
Since the function to minimize is totally separable among $m,k,i$, solving the above minimization problem is equivalent with solving the following problem independently for $1\leq m \leq M,1\leq k \leq K,1\leq i \leq N$:
\begin{equation}
\label{eq:l1_decompose}
    \min_{\{y_{m,k,i}\}}
	 \lambda| y_{m,k,i}|+\frac{\rho}{2}|x_{m,k,i}-y_{m,k,i}+u_{m,k,i}|^2.
\end{equation}

Plugging
\[
\begin{aligned}
\gamma :=& \lambda / \rho\\
y := & y_{m,k,i}\\
z := & x_{m,k,i} + u_{m,k,i}
\end{aligned}
\]
into (\ref{eq:l1_general}), we can recover (\ref{eq:l1_decompose}). Thus, the solution of (\ref{eq:l1_decompose}) is
\[y_{m,k,i} = \frac{x_{m,k,i} + u_{m,k,i}}{|x_{m,k,i} + u_{m,k,i}|} \cdot \max(0,|x_{m,k,i} + u_{m,k,i}|-\lambda/\rho ).\]
Concatenating the above equation for $1\leq m \leq M,1\leq k \leq K,1\leq i \leq N$, we obtain
\[ \mathbf{Y} = \frac{\mathbf{X+U}}{|\mathbf{X+U}|} \odot \max\Big (0, |\mathbf{X+U}| - \lambda/\rho\Big ),\]where all the operations are conducted elementwisely. That is exactly
$
\mathbf{Y}=\mathcal{CS}_{\frac{\lambda}{\rho}}\left(\mathbf{X}+\mathbf{U}\right)
$. We proved that the $\mathbf{Y}$ given by (\ref{eq:convsparsecoding_cpx_softthres}) is a solution of problem (\ref{eq:convsparsecodeing_Y_subproblem}).

%
%
%
%




\bibliographystyle{IEEEtran}
\bibliography{reference}
\begin{IEEEbiography}[{\includegraphics[width=1in,height=1.25in,clip,keepaspectratio]{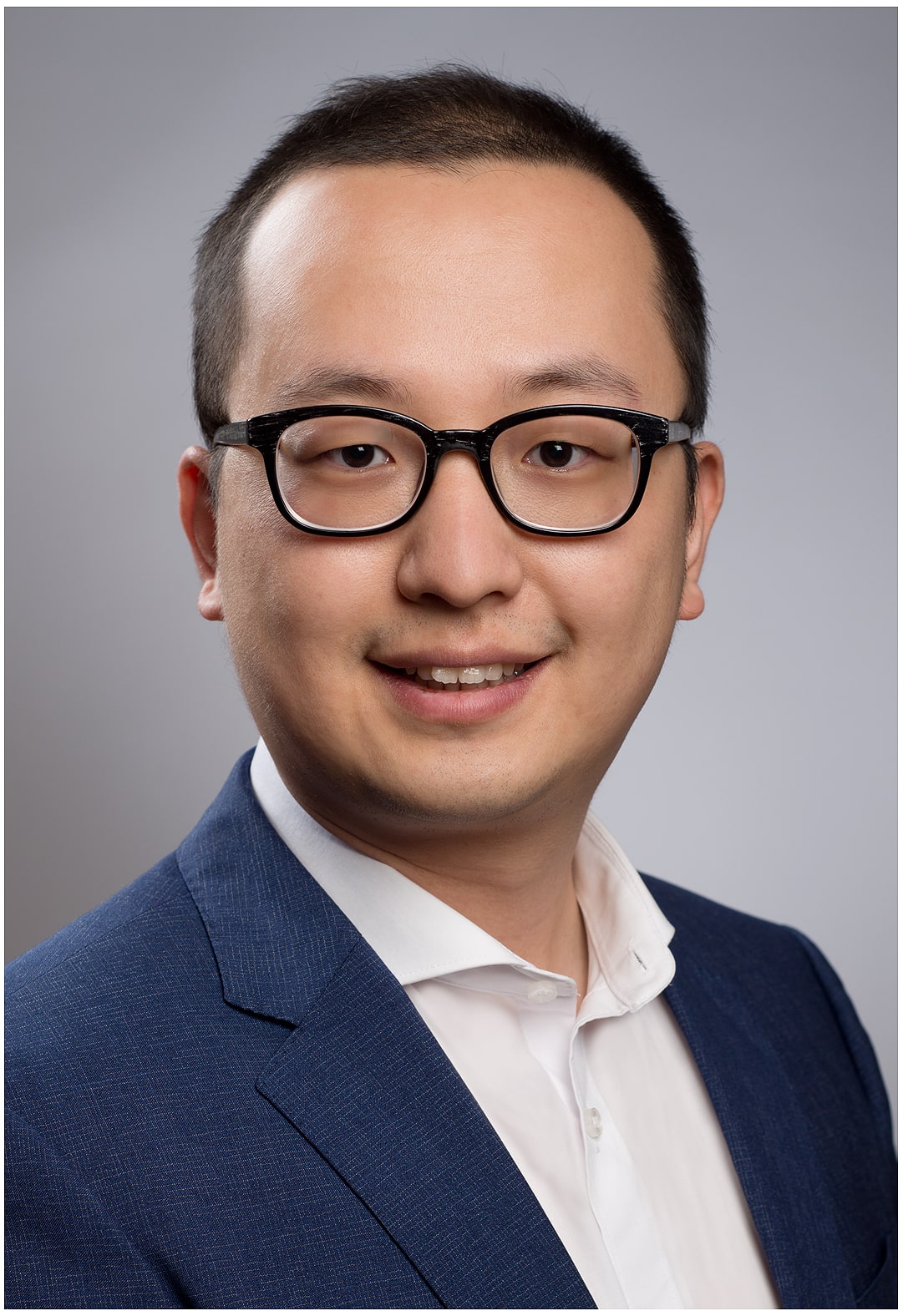}}]{Jian Kang} (S'16-M'19) received B.S. and M.E. degrees in electronic engineering from Harbin Institute of Technology (HIT), Harbin, China, in 2013 and 2015, respectively, and Dr.-Ing. degree from Signal Processing in Earth Observation (SiPEO), Technical University of Munich (TUM), Munich, Germany, in 2019. In August of 2018, he was a guest researcher at Institute of Computer Graphics and Vision (ICG), TU Graz, Graz, Austria. He is currently with Faculty of Electrical Engineering and Computer Science, Technische Universit\"at Berlin (TU-Berlin), Berlin, Germany. His research focuses on signal processing and machine learning, and their applications in remote sensing. In particular, he is interested in multi-dimensional data analysis, geophysical parameter estimation based on InSAR data, SAR denoising and deep learning based techniques for remote sensing image analysis. He obtained first place of the best student paper award in EUSAR 2018, Aachen, Germany.
\end{IEEEbiography}
\vskip -2\baselineskip plus -1fil
\begin{IEEEbiography}[{\includegraphics[width=1in,height=1.25in,clip,keepaspectratio]{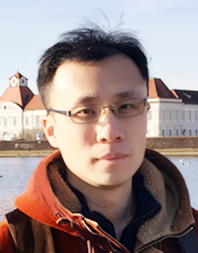}}]{Danfeng Hong}
(S'16-M'19) received the B.Sc. degree in computer science and technology, Neusoft College of Information, Northeastern University, China, in 2012, the M.Sc. degree (summa cum laude) in computer vision, College of Information Engineering, University, China, in 2015, the Dr. -Ing degree (summa cum laude) in Signal Processing in Earth Observation (SiPEO), Technical University of Munich (TUM), Munich, Germany, in 2019.

Since 2015, he worked as a Research Associate at the Remote Sensing Technology Institute (IMF), German Aerospace Center (DLR), Oberpfaffenhofen, Germany. Currently, he is a research scientist and leads a Spectral Vision group at EO Data Science, IMF, DLR, and also a permanent visiting scientist in GIPSA-lab, Grenoble INP, CNRS, Univ. Grenoble Alpes, Grenoble, France. In 2018 and 2019, he was a visiting scholar in GIPSA-lab, Grenoble INP, CNRS, Univ. Grenoble Alpes, Grenoble, France, and in RIKEN Artificial Intelligent Project (AIP), RIKEN, Tokyo, Japan. His research interests include signal / image processing and analysis, pattern recognition, machine / deep learning and their applications in Earth Vision.
\end{IEEEbiography}
\vskip -2\baselineskip plus -1fil
\begin{IEEEbiography}[{\includegraphics[width=1in,height=1.25in,clip,keepaspectratio]{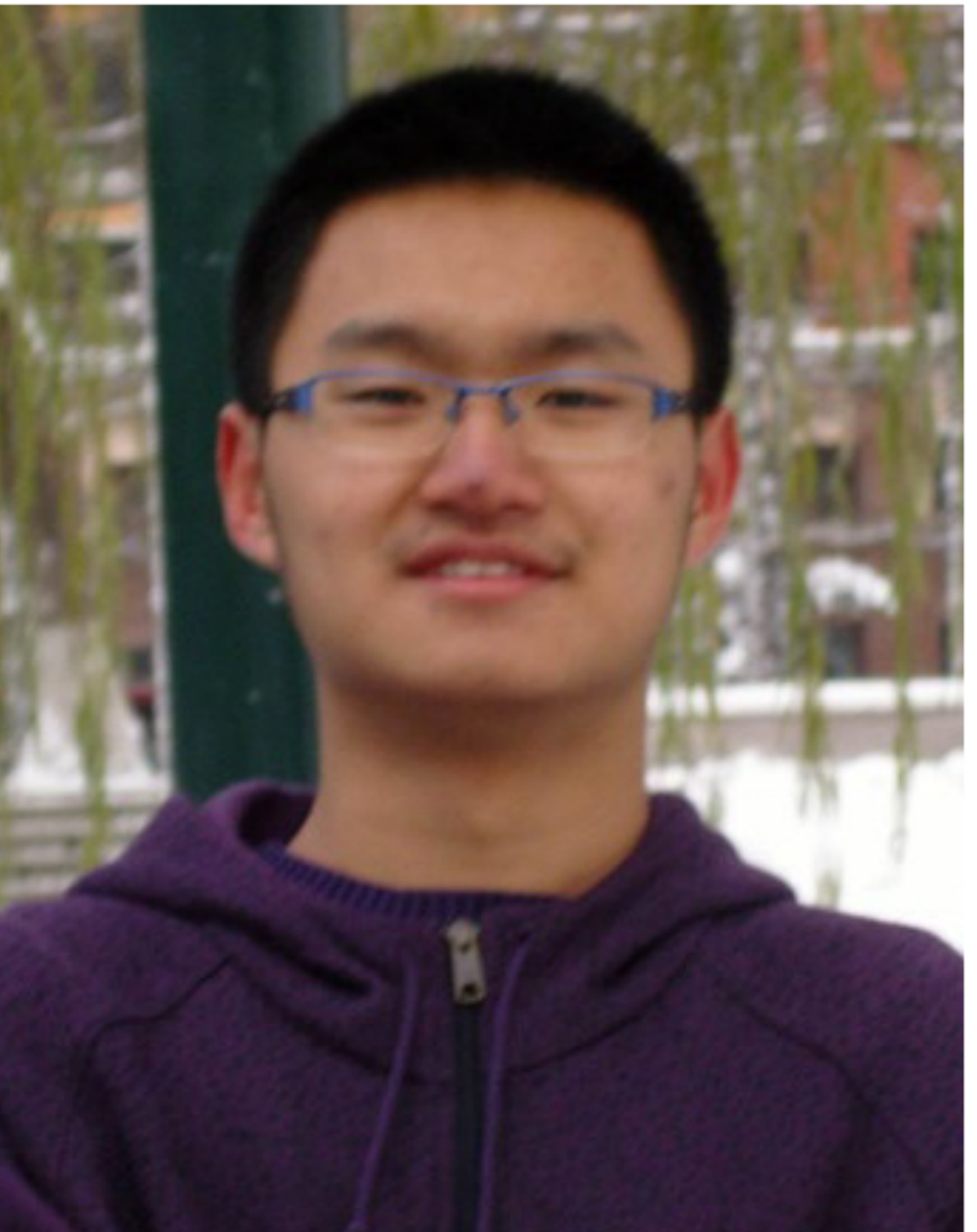}}]{Jialin Liu} received B.S. degree in automation and control from Tsinghua University, Beijing, China, in 2015 and is currently pursuing the Ph.D. degree in applied mathematics at University of California, Los Angeles (UCLA), Los Angeles, USA. His current research interest lies in the machine learning and optimization algorithms with applications in signal/image processing. He won ``Best Student Paper: Third Place'' at the 2017 International Conference on Image Processing (ICIP).
\end{IEEEbiography}
\vskip -2\baselineskip plus -1fil
\begin{IEEEbiography}[{\includegraphics[width=1in,height=1.25in,clip,keepaspectratio]{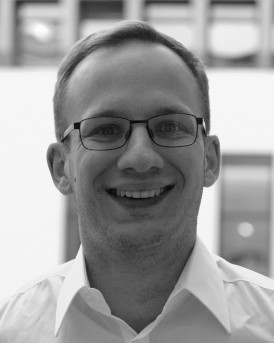}}]{Gerald Baier} received the M.Sc. degree in electrical engineering from the Université catholique de Louvain and the Karlsruhe Institute of Technology in 2012 and the Ph.D. degree from the Technical University of Munich in the field of synthetic aperture radar interferometry in 2018. From 2014 to 2018, he was at the department for SAR signal processing of the German Aerospace Center's (DLR) Remote Sensing Technology Institute.
Since 2018, he is a postdoctoral researcher in the Geoinformatics Unit at the RIKEN Center for Advanced Intelligence Project (AIP), Tokyo, Japan. His research interests include signal processing, machine learning, synthetic aperture radar and high-performance computing.
\end{IEEEbiography}
\vskip -2\baselineskip plus -1fil
\begin{IEEEbiography}[{\includegraphics[width=1in,height=1.25in,clip,keepaspectratio]{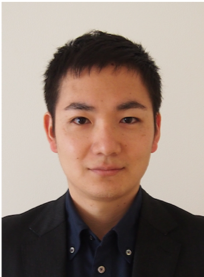}}]{Naoto Yokoya}
Naoto Yokoya (S'10-M'13) received the M.Eng. and Ph.D. degrees in aerospace engineering from the University of Tokyo, Tokyo, Japan, in 2010 and 2013, respectively.

He is currently a Unit Leader at the RIKEN Center for Advanced Intelligence Project, Tokyo, Japan, where he leads the Geoinformatics Unit since 2018. He is also a visiting Associate Professor at Tokyo University of Agriculture and Technology since 2019. He was an Assistant Professor at the University of Tokyo from 2013 to 2017. In 2015-2017, he was an Alexander von Humboldt Fellow, working at the German Aerospace Center (DLR), Oberpfaffenhofen, and Technical University of Munich (TUM), Munich, Germany. His research is focused on the development of image processing, data fusion, and machine learning algorithms for understanding remote sensing images, with applications to disaster management.

Dr. Yokoya won the first place in the 2017 IEEE Geoscience and Remote Sensing Society (GRSS) Data Fusion Contest organized by the Image Analysis and Data Fusion Technical Committee (IADF TC). He is the Chair (2019-2021) and was a Co-Chair (2017-2019) of IEEE GRSS IADF TC and also the secretary of IEEE GRSS All Japan Joint Chapter since 2018. He is an Associate Editor for the IEEE Journal of Selected Topics in Applied Earth Observations and Remote Sensing (JSTARS) since 2018. He is/was a Guest Editor for the IEEE JSTARS in 2015-2016, for Remote Sensing in 2016-2019, and for the IEEE Geoscience and Remote Sensing Letters (GRSL) in 2018-2019.
\end{IEEEbiography}
\vskip -2\baselineskip plus -1fil
\begin{IEEEbiography}[{\includegraphics[width=1in,height=1.25in,clip,keepaspectratio]{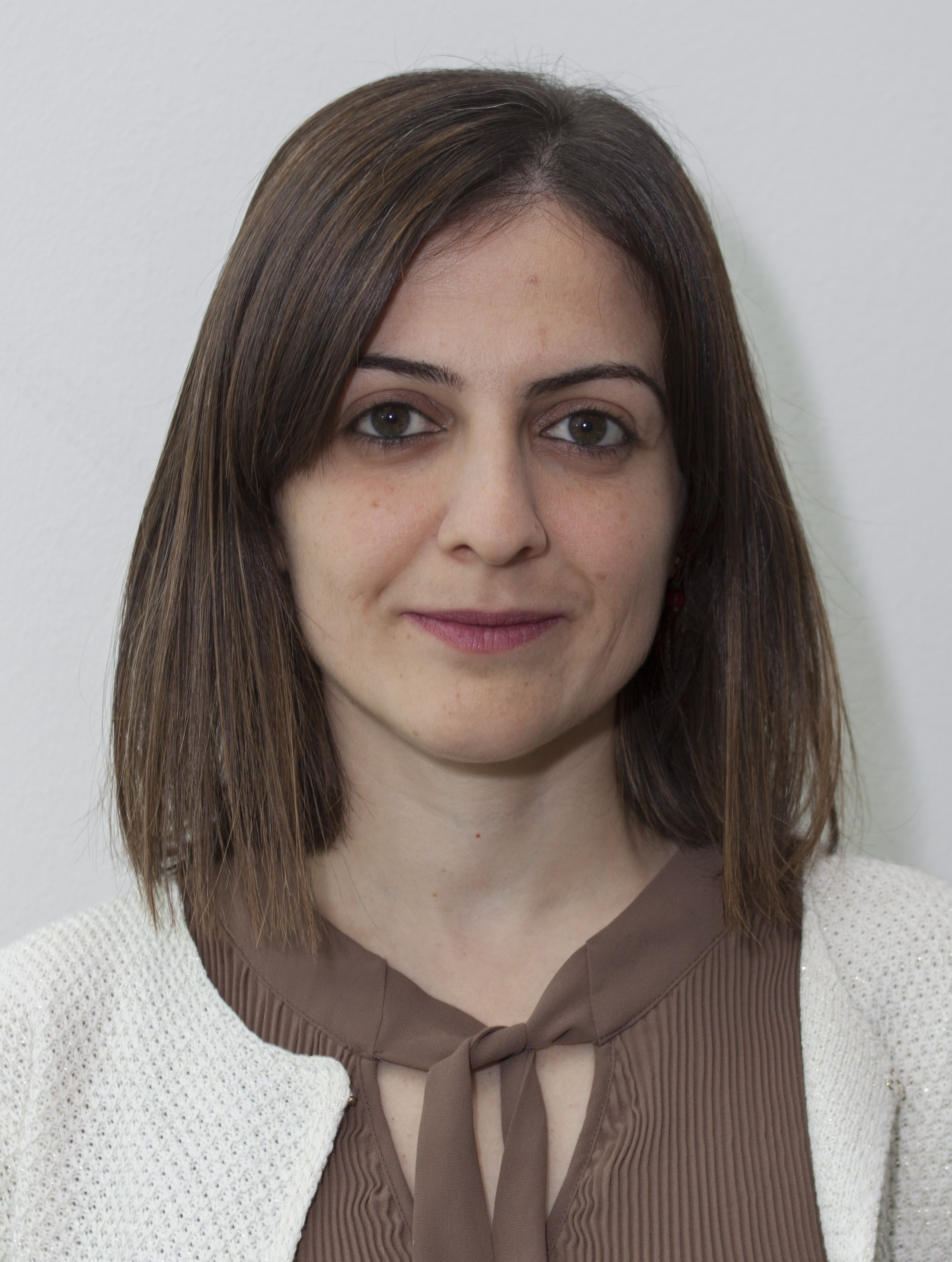}}]{Beg\"um Demir} (S’06–M’11–SM’16) received the B.S., M.Sc., and Ph.D. degrees in electronic and telecommunication engineering from Kocaeli University, Kocaeli, Turkey, in 2005, 2007, and 2010, respectively.

She is currently a Full Professor and head of the Remote Sensing Image Analysis (RSiM) group at the Faculty of Electrical Engineering and Computer Science, Technische Universit\"at Berlin, Germany since 2018. Before starting at TU Berlin, she was an Associate Professor at the Department of Computer Science and Information Engineering, University of Trento, Italy. Her research activities lie at the intersection of machine learning, remote sensing and signal processing. Specifically, she performs research on developing innovative methods for addressing a wide range of scientific problems in the area of remote sensing for Earth observation. She was a recipient of a Starting Grant from the European Research Council (ERC) with the project “BigEarth-Accurate and Scalable Processing of Big Data in Earth Observation” in 2017, and the “2018 Early Career Award” presented by the IEEE Geoscience and Remote Sensing Society. Dr. Demir is a senior member of IEEE since 2016.

Dr. Demir is a Scientific Committee member of several international conferences and workshops, such as: Conference on Content-Based Multimedia Indexing, Conference on Big Data from Space, Living Planet Symposium, International Joint Urban Remote Sensing Event, SPIE International Conference on Signal and Image Processing for Remote Sensing, Machine Learning for Earth Observation Workshop organized within the ECML/PKDD. She is a referee for several journals such as the PROCEEDINGS OF THE IEEE, the IEEE TRANSACTIONS ON GEOSCIENCE AND REMOTE SENSING, the IEEE GEOSCIENCE AND REMOTE SENSING LETTERS, the IEEE TRANSACTIONS ON IMAGE PROCESSING, Pattern Recognition, the IEEE TRANSACTIONS ON CIRCUITS AND SYSTEMS FOR VIDEO TECHNOLOGY, the IEEE JOURNAL OF SELECTED TOPICS IN SIGNAL PROCESSING, the International Journal of Remote Sensing, and several international conferences. Currently she is an Associate Editor for the IEEE GEOSCIENCE AND REMOTE SENSING LETTERS and Remote Sensing.
\end{IEEEbiography}
\end{document}